\newif\ifproofs
 \proofstrue 

\documentclass[10pt,journal]{IEEEtran}
 
\IEEEoverridecommandlockouts

\usepackage{pgfplotstable}
\usepackage{xstring}
\usepackage{booktabs}
\usepackage{multirow}
\usepackage{adjustbox}
 
\usepackage{standalone}  
\usepackage{graphicx}
\usepackage{csvsimple}
\usepackage{booktabs}
\usepackage{multirow}
\usepackage{array}
\usepackage{arydshln}
\usepackage{adjustbox}

\makeatletter
\newcommand{\ifstreq}[2]{%
  \begingroup
  \edef\temp@a{#1}%
  \edef\temp@b{#2}%
  \ifx\temp@a\temp@b
    \endgroup\expandafter\@firstoftwo
  \else
    \endgroup\expandafter\@secondoftwo
  \fi
}
\makeatother

\usepackage{KalmanNet}
\usepackage{booktabs}
\usepackage{multirow}
\usepackage{array}
\usepackage{arydshln}
  \usepackage[all=normal,paragraphs=tight,floats=normal,mathspacing=normal,wordspacing=tight,charwidths=tight,mathdisplays=normal,leading=normal]{savetrees}

\acrodef{cp}[CP]{conformal prediction}
\acrodef{mlp}[MLP]{multi-layer perceptron}
\acrodef{mdp}[MDP]{Markov decision process}
\acrodef{cgkf}[CGKF]{Conformalized Gaussian \ac{kf}}
\acrodef{cqkf}[CQKF]{Conformalized Quantile \ac{kf}}
\acrodef{cdkf}[CDKF]{Conformalized Distributional \ac{kf}}

\begin{document}

\title{Conformalized Kalman  Filters for State Estimation with  Trustworthy Confidence  Regions
}

\author{Olga Weisman, Nir Shlezinger, and Bracha Laufer-Goldshtein
\thanks{
Parts of this work were presented at the IEEE International Conference on Acoustics, Speech, and Signal Processing (ICASSP) 2026 as the paper \cite{weisman2026conformal}. 
 O. Weisman and N. Shlezinger are with the ECE School, Ben-Gurion University of the Negev, Israel (e-mail: olya.weisman@gmail.com; nirshl@bgu.ac.il). 
 B. Laufer-Goldshtein is with the Department of EE, Tel-Aviv University, Israel (email: blaufer@tauex.tau.ac.il). 
 This work was supported by the Israel Science Foundation (ISF) under grant no. 3314/25, by the European Research Council (ERC) under the ERC starting grant nr. 101163973 (FLAIR), and by the Israeli Ministry of Science and Technology.}
 }

\maketitle


\begin{abstract}
Kalman-type filters are widely used for tracking dynamic systems, yet the confidence regions commonly derived from their estimated covariances can become unreliable under nonlinearities, non-Gaussian disturbances, and model mismatch. In this work, we develop a \ac{cp} framework for equipping Kalman-type filters with statistically reliable confidence regions. Unlike \ac{cp} applied to black-box estimators, our approach exploits the recursively estimated first- and second-order moments of the state posterior, which capture the time-varying uncertainty induced by the underlying dynamics. Based on these statistical features, we propose three complementary constructions. The first conformally calibrates Gaussian confidence regions induced by the filter moments. The second employs quantile regression to map the estimated moments into the boundaries of adaptive convex regions, which are subsequently calibrated. The third learns a Gaussian-mixture representation of the posterior and conformalizes the resulting density-based regions, enabling the characterization of multimodal and non-convex uncertainty sets. We establish finite-sample coverage guarantees for both sample-wise confidence, which controls miscoverage at individual time instances, and trajectory-wise confidence, which jointly covers the state sequence over a prescribed horizon. Numerical experiments across diverse linear and nonlinear dynamic systems demonstrate that the proposed methods attain the prescribed coverage while producing tight and informative confidence regions, and highlight the relative merits of the three constructions under different posterior characteristics.
\end{abstract}

\acresetall

\section{Introduction}
\label{sec:intro}  
The \ac{kf} and its variants constitute one of the most widely used families of algorithms for tracking dynamic systems~\cite{gannot2008kalman}. These methods are based on a \ac{ss} representation of the underlying dynamics~\cite{durbin2012time}. For linear Gaussian \ac{ss} models, the classical \ac{kf} yields the minimum \ac{mse} estimate~\cite{kalman1960new}, while extensions such as the \ac{ekf}~\cite{schmidt1981kalman} and \ac{ukf}~\cite{julier2004unscented} enable its application to  nonlinear settings. In many applications, however, accurate state tracking alone is insufficient. The estimated state is often used by downstream mechanisms for tasks such as detection, control, planning, and decision-making, whose reliability depends not only on the estimated value but also on the {\em uncertainty} associated with it. It is therefore important to accompany the state estimate with trustworthy \emph{confidence regions}, which characterize the range of plausible state values~\cite{dahan2024uncertainty,rozenfeld2025conformal}.

Kalman-type algorithms naturally provide a measure of uncertainty as they recursively track the first- and second-order moments of the state. Accordingly, confidence regions are commonly constructed from the estimated state covariance under a Gaussian approximation of the posterior distribution~\cite{zhou2023real,eichstadt2016evaluation}. This construction is statistically justified for linear Gaussian \ac{ss} models, which underlie the classical derivation of the \ac{kf}~\cite{kalman1960new}. Outside this ideal setting, the propagated covariance may not faithfully characterize the estimation error distribution. Nonlinearities, non-Gaussian disturbances, and mismatch between the assumed and actual dynamics may all cause the resulting confidence regions to be miscalibrated, even when the corresponding point estimates remain accurate~\cite{uhlmann2024gaussianity}. Consequently, conventional Gaussian regions may be either overconfident, failing to attain their prescribed coverage, or excessively conservative and thus uninformative.

In recent years, \emph{\ac{cp}} has emerged as a  statistical framework for constructing prediction {intervals} in a data-driven manner with rigorous coverage guarantees~\cite{vovk2005algorithmic,angelopoulos2023conformal,angelopoulos2024theoretical}. {The central principle of split \ac{cp}} is to use a separate \emph{calibration set} to empirically characterize the errors of a given predictive model  and onstruct prediction sets around its outputs so that the resulting prediction sets attain a prescribed coverage level. This yields finite-sample, distribution-free guarantees under exchangeability, without requiring an explicit probabilistic model for the prediction error~\cite{romano2019conformalized,lei2018distribution,gupta2022nested}. Such guarantees make \ac{cp} particularly appealing when conventional model-based uncertainty estimates may be misspecified.

Much of the \ac{cp} literature is developed for standard classification and scalar regression \ac{ml} problems, while state estimation generally involves a continuous-valued \emph{vector} state. Extending \ac{cp} to multivariate regression requires characterizing joint prediction regions rather than scalar intervals and is therefore considerably more challenging~\cite{kong2012quantile,feldman2023calibrated, paindaveine2011directional, bovcek2017directional,dheur2025unified,rosenberg2022fast,vedula2023continuous,fang2025contra,sadinle2019least}. Existing approaches construct such regions through, e.g., learned directional quantiles~\cite{feldman2023calibrated} or learned representations of the conditional output distribution~\cite{sadinle2019least}. While these methods enable multi-dimensional confidence-region geometries, they typically treat the underlying predictor as a generic regression model, and do not explicitly exploit the structured uncertainty information available in model-based state estimators.

A second relevant body of work addresses \ac{cp} for sequential and non-stationary data~\cite{gibbs2021adaptive,xu2023conformal,zecchin2024localized,angelopoulos2023conformalPID,wang2026online,xu2024conformal}. Online conformal methods adapt their prediction sets as the data distribution evolves, typically by updating the conformal threshold according to the observed coverage of previous predictions. 
This mechanism is suited to settings in which the target associated with each past prediction is subsequently revealed, but is generally incompatible with state tracking, where the latent state remains unobserved after an estimate is produced. Related conformal methodologies have also been considered for trajectory prediction, safe planning, and sequential decision systems~\cite{su2024collaborative,lindemann2023safe,dietterich2022conformal,lindemann2024formal}. These works demonstrate how temporal errors can be calibrated, including through aggregation over an entire trajectory, thereby motivating both \emph{sample-wise} and \emph{trajectory-wise} notions of coverage. However, they generally conformalize prediction residuals or externally generated trajectory forecasts and do not use the recursively evolving uncertainty to shape the confidence regions.

Conformal calibration has further been combined with model-based probabilistic uncertainty in Bayesian learning settings~\cite{stanton2023bayesian,kim2025robust}. Such approaches demonstrate that an approximate probabilistic posterior can provide valuable input-dependent information for constructing more informative \ac{cp} sets, even when the probabilistic model itself is imperfect. Yet, these studies primarily address settings such as Bayesian optimization, where queried outcomes are subsequently observed and can be used for sequential calibration. They do not consider recursive estimation of an unobserved dynamic state, nor the particular statistical information provided by Kalman-type filtering. Thus, existing \ac{cp} methods either treat state estimation as a generic regression problem, rely on online feedback unavailable in tracking, or exploit model-based uncertainty in substantially different settings.

\subsection*{Contributions}
In this work, we develop a \ac{cp} framework tailored to Kalman-type tracking of dynamic systems. Our key observation is that conformalizing a Kalman-type filter differs fundamentally from applying \ac{cp} on top of a black-box \ac{ml} estimator, as Kalman-type filters recursively provide estimates of the first- and second-order moments of the state posterior. Although these moments do not necessarily define an accurate Gaussian posterior, their recursive computation accounts for much of the temporal variation of the dynamic system. We therefore leverage these time-varying statistical features to construct calibrated confidence regions, combining the adaptivity and structural information of Kalman filtering with the statistical reliability of \ac{cp}. Our framework supports confidence regions of varying flexibility, ranging from Gaussian-shaped regions to learned convex and multimodal regions.

Our main contributions are summarized as follows:
\begin{itemize}
    \item {\bf Framework:} We introduce a  framework for augmenting Kalman-type filters with  calibrated confidence regions. The framework uses the estimated first- and second-order posterior moments as time-dependent statistical features, thus exploiting the uncertainty information  propagated by the filter rather than treating the  estimator as a black box.

    \item {\bf Methodology:} We propose three complementary constructions for deriving confidence regions from these  features:
    \begin{enumerate}
        \item {\em \ac{cgkf}}, which retains the Gaussian posterior approximation implicit in \acp{kf} and uses calibration data to correct the confidence regions induced by its estimated first- and second-order moments. This construction is particularly suitable when the state posterior can be reasonably approximated as Gaussian.

        \item {\em \ac{cqkf}}, which maps the estimated first- and second-order moments into the boundaries of a convex confidence region, which is subsequently calibrated using held-out trajectories. This construction requires additional training data, but accommodates asymmetric and non-Gaussian posterior distributions, while remaining primarily suited to unimodal distributions due to the convex form of the resulting region.

        \item {\em \ac{cdkf}}, in which a dedicated model maps the moments produced by the Kalman-type filter into the parameters of a Gaussian-mixture approximation of the state posterior. The  density-based regions are then conformally calibrated, enabling the representation of multimodal posterior distributions and potentially non-convex confidence regions.
    \end{enumerate}

    \item {\bf Theory:} We establish statistical coverage guarantees for the proposed confidence regions under both relevant notions of reliability in dynamic state estimation: \emph{sample-wise} coverage, which controls the miscoverage probability at each time instance, and \emph{trajectory-wise} coverage, which controls the probability that any state along the considered trajectory falls outside its corresponding region.

    \item {\bf Experiments:} We evaluate the proposed conformalized \ac{kf} methods over diverse dynamic systems and operating conditions. Our  results demonstrate that the proposed constructions provide reliable coverage while producing tight and informative confidence regions, and illustrate the relative merits of the Gaussian, quantile-regression, and distribution-based formulations under different  characteristics.
\end{itemize}

The rest of this paper is organized as follows: Section~\ref{sec:System Model and Preliminaries} presents the system model. The conformalized \ac{kf} methods are presented in Section~\ref{sec: Method} and evaluated in Section~\ref{sec: Numerical Study}, while Section~\ref{sec:conclusion} provides concluding remarks.

Throughout this paper, we use  boldface lowercase for vectors, e.g., $\myVec{x}$, and boldface uppercase letters, e.g., $\myMat{M}$, for matrices.  Calligraphic letters, e.g., $\mySet{X}$, are used for sets. 
We use  $\mathbb{R}$ for the set of real numbers and $\mathbb{N}$ for the natural numbers. The operations $(\cdot)^\top$ and  $\| \cdot \|$  are used for transpose and  $\ell_2$ norm,  respectively, while $\mathcal{N}(\cdot,\cdot)$ is the Gaussian distribution.

\section{System Model and Preliminaries}
\label{sec:System Model and Preliminaries}  

\subsection{State-Space Models}
\label{ssec:SSmodel}
\ac{ss} models are statistical representations of dynamic systems. They describe the evolution of a latent state vector $\gvec{s}_t \in \mathbb{R}^m$ and its associated observation vector $\gvec{z}_t \in \mathbb{R}^n$ at each time step $t \in \mathbb{N}$ via equations of the form~\cite{durbin2012time}
\begin{subequations}
    \label{eq:ssModel}
\begin{align}
    \gvec{s}_t &= f_t(\gvec{s}_{t-1}) + \gvec{w}_t, \label{eq:state_transition} \\
    \gvec{z}_t &= h_t(\gvec{s}_t) + \gvec{v}_t. \label{eq:gen_state}
\end{align} 
\end{subequations}
In \eqref{eq:ssModel},  $f_t: \mathbb{R}^m \mapsto \mathbb{R}^m$ 
is the state transition function, and $h_t: \mathbb{R}^m \mapsto \mathbb{R}^n$ is the observation function. The process noise $\mathbf{w}_t$ and observation noise $\mathbf{v}_t$ are mutually and temporally independent, with covariance matrices $\mathbf{Q}_t$ and $\mathbf{R}_t$. 

\subsection{\ac{kf}-Type Tracking of Dynamic Systems} 
\label{ssec:KF}
Kalman-type algorithms are a family of filtering methods which estimate $\gvec{s}_t$ from {current as well as past observations} $\mathbf{z}_{1:t}$~\cite{durbin2012time}. This is achieved by recursively propagating estimates of the first- and second-order moments of $\gvec{s}_t$ given $\mathbf{z}_{1:t}$, denoted {as} $\hat{\gvec{s}}_{t}$ and $\boldsymbol{\Sigma}_{t}$ respectively, in two stages:

$(i)$ The {\em predict} step uses  $\hat{\gvec{s}}_{t-1}$ and $\boldsymbol{\Sigma}_{t-1}$ to estimate the moments of $\gvec{s}_t$ and $\mathbf{z}_t$ given $\mathbf{z}_{1:t-1}$. 
For instance, in the \ac{ekf}, the predict step estimates these moments as
\begin{subequations}
    \label{eqn:EKFpredict}
\begin{align}
    \hat{\gvec{s}}_{t|t-1} &= f_t(\hat{\gvec{s}}_{t-1}), \,      &\boldsymbol{\Sigma}_{t|t-1} = \hat{\mathbf{F}}_t \boldsymbol{\Sigma}_{t-1} \hat{\gvec{F}}_{t}^\top + \gvec{Q}_t, \\
    \hat{\gvec{z}}_{t|t-1} &= h_t(\hat{\gvec{s}}_{t|t-1}),    \,     &  \gvec{S}_{t|t-1} = \hat{\gvec{H}}_t \boldsymbol{\Sigma}_{t|t-1} \hat{\gvec{H}}_t^\top + \gvec{R}_t,
\end{align}
\end{subequations}
where $\hat{\mathbf{F}}_{t}$ and $ \hat{\mathbf{H}}_t$ are computed by the local linearizations, $   \hat{\mathbf{F}}_t = \nabla_{\mathbf{s}} f_{t}(\hat{\gvec{s}}_{t-1})$ and  $\hat{\mathbf{H}}_t = \nabla_{\mathbf{s}} h_t(\hat{\mathbf{s}}_{t|t-1})$, respectively. 

$(ii)$ The {\em update} step forms a linear filter (the {\em Kalman gain}) $\mathbf{K}_t$, and uses it to obtain  $\hat{\mathbf{s}}_{t}$ and $\boldsymbol{\Sigma}_{t}$ by balancing the predicted moments with the current observation $\gvec{z}_t$. For example, the \ac{ekf} sets 
$\mathbf{K}_t = \boldsymbol{\Sigma}_{t|t-1} \hat{\mathbf{H}}_t^\top \mathbf{S}_{t|t-1}^{-1}$, and updates the posterior moments via 
    \begin{subequations}
    \label{eqn:EKFupdate}
    \begin{align}
        \hat{\gvec{s}}_{t} &= \hat{\gvec{s}}_{t|t-1} + \mathbf{K}_t ( \mathbf{z}_t - \hat{\mathbf{z}}_{t|t-1}), \\
        \boldsymbol{\Sigma}_{t} &= \boldsymbol{\Sigma}_{t|t-1} - \mathbf{K}_t \mathbf{S}_{t|t-1} \mathbf{K}_t^\top.
    \end{align}
    \end{subequations}
{When $f_t(\cdot)$ and $h_t(\cdot)$ are linear, the \ac{ekf} reduces to the standard \ac{kf}, which is \ac{mse}-optimal when the process and observation noises, as well as the initial state $\myVec{s}_0$, are Gaussian.}

Alternative \ac{kf}-type algorithms share the above structure, while employing different methods for propagating the moments in \eqref{eqn:EKFpredict}-\eqref{eqn:EKFupdate}. For instance, the \ac{ukf}~\cite{julier2004unscented} replaces the local linearization of the EKF with the unscented transform, which propagates a set of deterministically chosen sigma points through the nonlinear state-transition and observation functions to approximate the posterior mean and covariance; the cubature Kalman filter \cite{arasaratnam2009cubature} employs the cubature rule to approximate the Gaussian-weighted integrals arising in nonlinear Bayesian filtering.

\subsection{Problem Formulation}
\label{sec:Problem Formulation} 
State estimation  often requires not only accurate point-wise estimates, but also reliable uncertainty quantification. For linear Gaussian \ac{ss} models, the posterior state distribution is Gaussian and is thus fully characterized by its first- and second-order moments, which are optimally estimated by the \ac{kf}. In this setting, confidence regions can be constructed analytically from the posterior covariance matrix.
When these assumptions do not hold, characterizing confidence regions is considerably more challenging. 

We seek to enhance Kalman-type filters by providing statistically valid confidence guarantees, even when the assumptions underlying Gaussian filtering are violated. Specifically, for a user-specified miscoverage level $\alpha\in(0,1)$, we aim to construct a confidence region $\mathcal{C}_t^\alpha\subseteq\mathbb{R}^m$ that contains the  state with high probability while remaining as compact as possible. Depending on the application, the  confidence guarantee may be either $(i)$ {\em sample-wise}, ensuring coverage independently at each time instant,
\begin{subequations}
\label{eqn:Confidence}
\begin{equation}
\Pr({\gvec{s}}_t \notin \mySet{C}_t^{\alpha}) \leq \alpha,
\label{eqn:InstConfidence}
\end{equation}
or $(ii)$ {\em trajectory-wise}, ensuring that the entire trajectory is jointly covered with probability at least $1-\alpha$,
\begin{equation} \Pr\bigl({\gvec{s}}_t \in \mySet{C}_t^{\alpha}, \forall t \in \{1,\ldots,T\}\bigr) \geq  1- \alpha. \label{eqn:TrajConfidence} \end{equation}
\end{subequations}

We are specifically interested in confidence-region construction mechanisms that satisfy the following requirements:
\begin{enumerate}[label={C\arabic*}]
\item \label{itm:nonlinear} The state transition and observation functions in the underlying \ac{ss} model may be {\em nonlinear}, and the distribution of the process and observation noises may be {\em non-Gaussian}.
\item \label{itm:Alg} The proposed mechanism should be compatible with a broad family of {\em Kalman-type estimators}.
\item \label{itm:mismatch} The \ac{ss} model available to the estimator may be a {\em mismatched approximation} of the true system dynamics.
\end{enumerate}

To address~\ref{itm:nonlinear}--\ref{itm:mismatch}, we assume access to a set of multiple trajectories generated by the underlying dynamical system, denoted 
\begin{equation}
 \mySet{D}= \big\{ \big\{ (\gvec{s}_t^{(i)},\myVec{z}_t^{(i)}) \big\}_{t=1}^{T} \big\}_{i=1}^{|\mySet{D}|}.
 \label{eqn:dataset}
\end{equation}
Our objective is to construct confidence regions that satisfy either the sample-wise and trajectory-wise guarantees while remaining as tight as possible and fully exploiting the information already produced by Kalman-type filters.

\subsection{Preliminaries: Conformal Prediction}
\label{ssec:Conformal Prediction} 
\Ac{cp} is a statistical method for quantifying uncertainty in predictive \ac{ml} models{~\cite{vovk2005algorithmic,angelopoulos2023conformal,angelopoulos2024theoretical}}. In the context of multivariate regression, it considers a  generic case where data is obtained from an unknown distribution $\Pr(\myVec{x},\myVec{y})$ over an input-output space $\mySet{X}\times \mySet{Y}$, with the objective of predicting an $m$-dimensional  vector $\myVec{y}\in\mySet{Y}=\mathbb{R}^{m}$ from a feature vector {$\myVec{x}\in\mySet{X}\subseteq\mathbb{R}^q$}. 
Given a trained \ac{ml} model and a new input $\myVec{x}$, \ac{cp} constructs a {prediction region} $\mySet{C}^\alpha(\myVec{x}) \subseteq \mySet{Y}$ such that the true outcome lies within {$\mySet{C}^\alpha(\myVec{x})$ with probability at least $1-\alpha$, i.e., $\Pr\!\left(\myVec{y} \in \mySet{C}^{\alpha}(\myVec{x})\right) \geq   1-\alpha$. To construct such sets,} split \ac{cp} requires (in addition to an \ac{ml} model trained on training data $\mySet{D}_{\rm train}$), a calibration dataset $\mySet{D}_{\rm cal}
=\{(\myVec{x}^{(i)},\myVec{y}^{(i)})\}_{i=1}^{|\mySet{D}_{\rm cal}|}$.
{Two \ac{cp} approaches relevant to our setting, illustrated in Fig.~\ref{fig:prediction_regions}, are} \emph{quantile regression}~\cite{kong2012quantile,feldman2023calibrated,paindaveine2011directional,bovcek2017directional}, and \emph{distribution learning}~\cite{sadinle2019least}, reviewed next.



\subsubsection{Quantile Regression}
\label{sssec:DQR} 
\Ac{cp} methods based on quantile regression aim to directly learn the quantile intervals without imposing a statistical model on the posterior~\cite{romano2019conformalized}, by first training a dedicated quantile regression model using $\mySet{D}_{\rm train}$, and then calibrating it using $\mySet{D}_{\rm cal}$. For multivariate regression, \ac{dqr} \cite{kong2012quantile,feldman2023calibrated, paindaveine2011directional, bovcek2017directional} is based on characterizing one-dimensional  intervals over different projections of $\mathcal{Y}$, using a dedicated \ac{dnn} with parameters $\myVec{\theta}$ that is trained using  $\mySet{D}_{\rm train}$ to map an input $\myVec{x}$ and a direction vector $\myVec{u} \in \mathbb{R}^m$ into the $\alpha$-quantile interval limit of $\myVec{u}^\top \myVec{y}$.  Specifically, given a set of unit-norm directions $\mySet{U} \subset \mathbb{R}^m$, \ac{dqr} trains its \ac{dnn} $\hat{\mu}_{\myVec{\theta}}:\mySet{X}\times \mySet{U} \mapsto \mathbb{R}$ to approach
%
 \begin{equation}
\label{eqn:cr_leaning}    
\arg\min_{\myVec{\theta}} 
\sum_{\myVec{u} \in \mySet{U}}
\sum_{(\myVec{x}, \myVec{y}) \in \mySet{D}_{\rm train}}
\rho_\alpha \left(\myVec{u}^\top \myVec{y}, \hat\mu_{\myVec{\theta}}(\myVec{x},\myVec{u})\right),
\end{equation}
with $\rho_\alpha$ being the elementwise pinball loss ~\cite{koenker1978regression}, given by
\begin{equation}
\rho_\alpha(y,\hat y)
=
\begin{cases}
\alpha(y-\hat y), & y>\hat y,\\
(1-\alpha)(\hat y-y), & \text{otherwise}.
\end{cases}
\label{eqn:Pinball}
\end{equation}


The calibration set $\mySet{D}_{\mathrm{cal}}$ is used to construct the prediction region by evaluating the directional quantile model on the calibration samples and computing the nonconformity scores
\begin{equation}
\label{eqn:residuals_DQR}
R^{(i)}  = \max_{u\in\mySet{U}} \left( \hat{\mu}_{\myVec{\theta}} (\myVec{x}^{(i)},\myVec{u}) - \myVec{u}^\top \myVec{y}^{(i)} \right),
\end{equation}
which measure the largest violation of the predicted directional quantile constraints over all directions. These residuals are then used to obtain the prediction region by setting $\hat Q^{1-\alpha}$ to the empirical $\left\lceil
\left(\left|\mySet{D}_{\mathrm{cal}}\right|+1\right)(1-\alpha)
\right\rceil$  quantile of $\{R^{(i)}\}$, resulting in
\begin{equation}
\label{eqn:dqr_cp_region}
\!\!\mathcal{C}^{\alpha}_{\mathrm{DQR}}(\myVec{x}) \!=\! \bigcap_{\myVec{u}\in\mySet{U}}\!
\left\{\myVec{y}\in\mathbb{R}^{m}: \myVec{u}^\top \myVec{y} \ge
\hat{\mu}_{\myVec{\theta}}(\myVec{x},\myVec{u} ) \!-\!\hat Q^{1-\alpha} \right\},
\end{equation}
which is a convex subset of $\mathbb{R}^m$. 
In the scalar case where $m=1$ and $\mySet{U} = \pm 1$, \ac{dqr} produces a confidence region given by
\begin{equation}
    \label{eqn:pred_int_cqr}
\mathcal{C}^{\alpha}_{\mathrm{DQR}}(\myVec{x}) = \left[ \hat{\mu}_{\myVec{\theta}}(\myVec{x}, -1 )- \hat Q^{1-\alpha},\ \hat{\mu}_{\myVec{\theta}}(\myVec{x},1)+ \hat Q^{1-\alpha} \right],
\end{equation}
specializing scalar \ac{cqr}~\cite{romano2019conformalized}.

\begin{figure}
\centering
\includegraphics[width=\columnwidth]{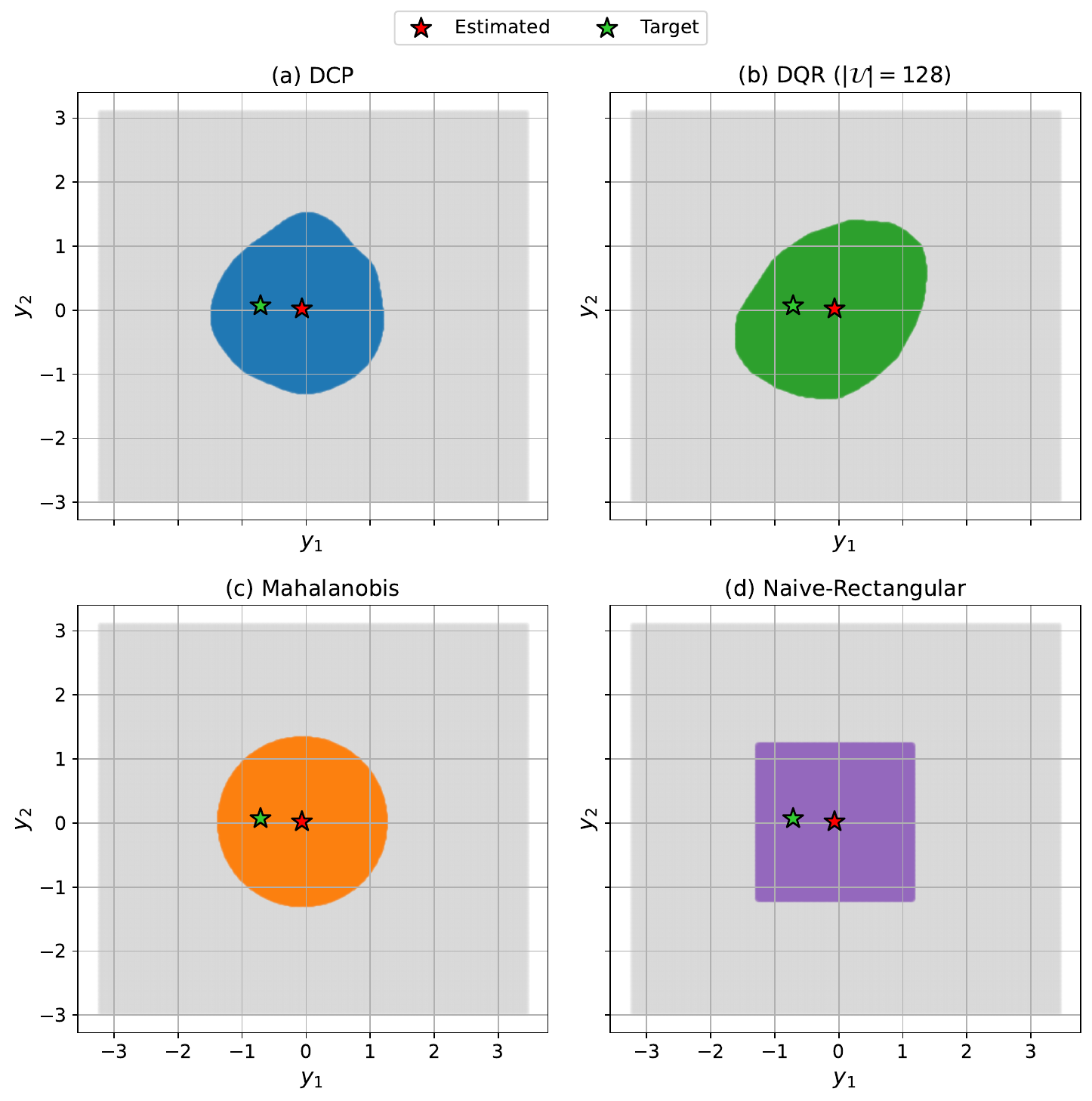}
\caption{Prediction regions illustrations for $m=2$. Upper row: generic \ac{dcp} (left) and \acs{dqr} (right); lower row: special cases of Gaussian (left), which is \ac{dcp} with $K=1$ and rectangular (right), i.e., \acs{dqr} with $\mySet{U} = \{[\pm 1, 0]^\top, [0, \pm 1]^\top\}$. The estimated and target states are shown for reference. }
\label{fig:prediction_regions}
\end{figure}
\subsubsection{Distributional \ac{cp}}
\label{sssec:DR-CP} 
Distributional methods utilize  a conditional density estimator $\hat{f}_{\hat{\myVec{\theta}}}(\myVec{y}|\myVec{x})$. A notable approach is \ac{dcp}~\cite{sadinle2019least}, which constructs prediction regions from a Gaussian mixture model for the posterior, given by
\begin{equation}
\hat{f}_{\hat{\myVec{\theta}}}(\myVec{y}|\myVec{x})=
\sum_{k=1}^{K}
\pi_k(\myVec{x})
\mathcal{N}\!\left(y;\myVec{\mu}_k(\myVec{x}),\myMat{\Sigma}_k(\myVec{x})\right).
\label{eq:MDN}
\end{equation}
In \eqref{eq:MDN}, $\{\pi_k(\myVec{x}), \myVec{\mu}_k(\myVec{x}), \myMat{\Sigma}_k(\myVec{x})\}_{k=1}^K$ denote the mixture weights, means, and covariance matrices. These  are obtained from $\myVec{x}$ using a dedicated  \ac{dnn}, trained separately from the \ac{cp} task using  $\mySet{D}_{\rm train}$.

The prediction region is obtained as a superlevel set of the learned conditional density, where the density threshold is determined through conformal calibration. The calibration set $\mySet{D}_{\mathrm{cal}}$ is used to construct the prediction region by evaluating the learned conditional density on the calibration samples and computing the nonconformity scores
\begin{equation}
R^{(i)} = -\log\hat{f}_{\hat{\myVec{\theta}}}\!\left(\myVec{y}^{(i)}\,\middle|\,\myVec{x}^{(i)}\right),
\label{eq:DRCPScore}
\end{equation}
where samples assigned higher probability by the predictive model yield smaller residuals. These residuals are then used to obtain the prediction region by setting $\hat Q^{1-\alpha}$ to the empirical $\left\lceil
\left(\left|\mySet{D}_{\mathrm{cal}}\right|+1\right)(1-\alpha)
\right\rceil$ quantile of $\{R^{(i)}\}$, resulting in
\begin{equation}
\mathcal C^{\alpha}_{\mathrm{DCP}}(\myVec{x})
= \left\{ \myVec{y}\in\mathcal Y: -\log\hat{f}_{\hat{\myVec{\theta}}}(\myVec{y}|\myVec{x}) \le \hat Q^{1-\alpha} \right\},
\label{eq:DRCPRegion}
\end{equation}
which contains all responses whose nonconformity score does not exceed the calibrated threshold.

\section{Conformalized Kalman Filter}
\label{sec: Method}
Interpreting the moments produced by a Kalman-type filter as defining a Gaussian posterior is generally unjustified under the nonlinear and non-Gaussian dynamics considered in~\ref{itm:nonlinear}, as well as under the model mismatch considered in~\ref{itm:mismatch}. Nonetheless, a straightforward application of existing \ac{cp} methods for multivariate regression is ill-suited to the considered tracking setting. Such methods are typically designed for a fixed regression task, whereas in dynamic systems the state distribution and the associated estimation-error distribution evolve over time. Applying them independently at each time instance would therefore either disregard this temporal variation or require separate prediction models and sufficient training and calibration data for each time index.

Our proposed approach combines the complementary strengths of Kalman-type filtering and \ac{cp}. Rather than treating the estimated first- and second-order moments, i.e., $(\hat{\myVec{s}}_t, \myMat{\Sigma}_t)$ in \eqref{eqn:EKFupdate}, as a complete probabilistic characterization of the state posterior, we interpret them as dynamically adapted statistical features that are informative of this posterior. Since these features are recursively updated from the \ac{ss} model and the acquired observations, they account for much of the temporal variation of the tracking problem and can consequently serve as inputs to \ac{cp} mechanisms. 

Based on this principle, Subsection~\ref{subsec: methods} introduces three conformalized Kalman filter constructions. These differ in the amount of data and learning they require, as well as in the geometry and flexibility of the confidence regions they support: an analytically initialized Gaussian construction, a learned quantile-based construction producing convex regions, and a learned distribution-based construction capable of representing multimodal posteriors.
The three constructions share a common conformal calibration and evaluation procedure, which transforms their initial confidence regions into statistically valid ones. In Subsection~\ref{subsec: Guarantees}, we establish finite-sample guarantees that apply to all considered constructions under exchangeability of the calibration and test trajectories. These results show that each method provides trustworthy confidence regions under both the sample-wise and trajectory-wise notions of coverage, although the resulting regions may differ substantially in their shape, tightness, and data requirements. We conclude the section with a comparative discussion in Subsection~\ref{subsec: Discussion}.

\subsection{Conformalized Kalman Filter Methods}
\label{subsec: methods}
 We next present three approaches for applying \ac{cp} on top of  Kalman-type filters.   The proposed methods differ in their nonconformity scores (which induce the geometry and flexibility of the confidence regions they construct), as well as in their data requirements. The first approach (\ac{cgkf}) directly exploits the moments produced by the filter and does not require learning an additional model, allowing all available labeled trajectories to be used for conformal calibration. The latter two approaches (\ac{cqkf} and \ac{cdkf}) follow the \ac{ml}-aided \ac{cp} methodologies reviewed in Subsection~\ref{ssec:Conformal Prediction}: they train a dedicated model to construct an initial confidence region from the filter outputs and therefore adopt the split \ac{cp} paradigm, partitioning the available data into disjoint training and calibration sets.  The main characteristics of the three approaches are summarized in Table~\ref{tab:cqkf_summary}.

\begin{table}
\centering
\caption{Summary of conformalized \ac{kf} methods}
\label{tab:cqkf_summary}
\renewcommand{\arraystretch}{1.2}
\begin{tabular}{|c|c|c|c|c|}
\hline
\textbf{Method} &
\textbf{Construction} &
\textbf{$\mathcal{D}_{\mathrm{train}}$} &
\textbf{$\mathcal{D}_{\mathrm{cal}}$} &
\textbf{Region Geometry} \\
\hline


\ac{cgkf} &
Analytical &
$\times$ &
\checkmark &
Gaussian \\
\hline

\ac{cqkf} &
Learned borders &
\checkmark &
\checkmark &
General convex  \\
\hline

\ac{cdkf} &
Learned distribution &
\checkmark &
\checkmark &
Gaussian mixture \\
\hline

\end{tabular}
\end{table}

\subsubsection{Common Calibration and Evaluation Procedure}
\label{subsubsec:common_operation}
All proposed conformalized Kalman filters share a common two-stage operation, illustrated in Fig.~\ref{fig:Trainin_Evaluation.pdf}. In an offline calibration stage, a selected Kalman-type filter is applied to the trajectories in $\mySet{D}_{\rm cal}$ to obtain the estimated posterior moments $\{(\hat{\myVec{s}}_t^{(i)},\myMat{\Sigma}_t^{(i)})\}$. Based on these moments, the ground-truth states, and the prescribed miscoverage level $\alpha$, each method computes a method-specific nonconformity score $R_t^{(i)}$ for every calibration trajectory $i$ and time instance $t$. These scores are then mapped into conformal correction terms according to the desired notion of coverage.

We set the conformal order-statistic index to
\begin{equation}
\label{eqn:conformal_order_statistic}
k = \left\lceil \bigl(|\mySet{D}_{\mathrm{cal}}|+1\bigr)(1-\alpha) \right\rceil.
\end{equation}
For sample-wise confidence as in \eqref{eqn:InstConfidence}, calibration is performed separately at each time instance. Specifically, for every $t\in\{1,\ldots,T\}$, we sort
$R_t^{(1)}\leq \cdots \leq R_t^{(|\mySet{D}_{\mathrm{cal}}|)}$
and set
\begin{equation}
\label{eqn:SmpWise}
\hat{Q}^{1-\alpha}(t)=R_t^{(k)}.
\end{equation}
The resulting time-dependent correction accounts for the evolution of the state and estimation-error distributions along the trajectory. For trajectory-wise confidence as in \eqref{eqn:TrajConfidence}, each calibration trajectory is instead assigned the aggregate score
\begin{equation}
\label{eqn:trajectory_score}
R^{(i)} = \max_{t\in\{1,\ldots,T\}}R_t^{(i)}. 
\end{equation}
After sorting
$R^{(1)}\leq\cdots\leq R^{(|\mySet{D}_{\mathrm{cal}}|)}$,
we set
\begin{equation}
\label{eqn:TrajWise}
\hat{Q}^{1-\alpha}(t) \equiv \hat{Q}^{1-\alpha} = R^{(k)}, 
\qquad t\in\{1,\ldots,T\}.
\end{equation}
Thus, a common correction is used  to control the probability that at least one state along the trajectory is not covered.

During evaluation, a Kalman-type filter processes the observed sequence and provides the moments $(\hat{\myVec{s}}_t,\myMat{\Sigma}_t)$. The selected conformalization method combines these moments with the previously computed correction $\hat{Q}^{1-\alpha}(t)$ to construct the confidence region $\mySet{C}_t^\alpha$. The specific definitions of the nonconformity score and the resulting region differ among the three proposed methods. A different Kalman-type implementation may be used at evaluation, provided that the underlying \ac{ss} model is the same as the one used in calibration.

\begin{figure*}
\centering
\includegraphics[width=0.95\textwidth]{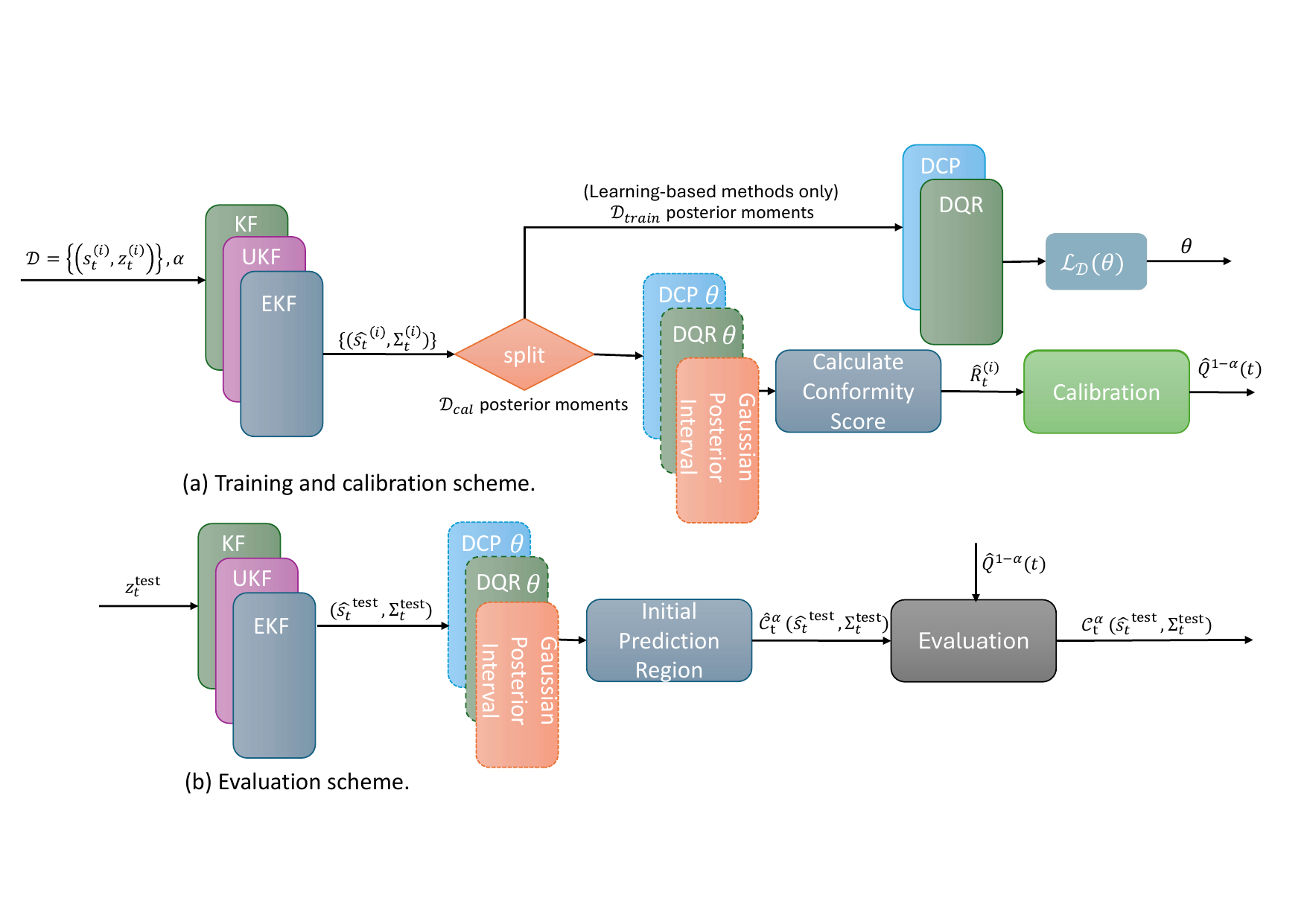}
\vspace{-1.6cm}
    \caption{High-level illustration of the proposed framework:
    $(a)$ Training and calibration scheme;
    $(b)$ Evaluation scheme.}
    \label{fig:Trainin_Evaluation.pdf}
\end{figure*}

 \subsubsection{\ac{cgkf}} 
\label{subsubsec:cgkf}
The simplest of the three proposed constructions uses the posterior moments $(\hat{\myVec{s}}_t,\myMat{\Sigma}_t)$ produced by the Kalman-type filter and does not require training an additional model. Specifically, \ac{cgkf} retains the ellipsoidal geometry induced by a Gaussian posterior, while using conformal calibration to resize the corresponding confidence region when this posterior approximation is inaccurate. Consequently, all available labeled trajectories can be allocated to calibration, i.e., $\mySet{D}_{\rm cal}$ is set to $\mySet{D}$ in \eqref{eqn:dataset}.  

\smallskip
{\bf Initial Prediction Region:}
Under the Gaussian approximation, the posterior distribution of the state is represented as
$\mathcal{N}(\hat{\myVec{s}}_t, \myMat{\Sigma}_t)$. Accordingly, the squared Mahalanobis distance 
$\|\myVec{s}_t-\hat{\myVec{s}}_t\|_{\myMat{\Sigma}_t}^2 \triangleq 
    (\myVec{s}_t-\hat{\myVec{s}}_t)^\top\myMat{\Sigma}_t^{-1}(\myVec{s}_t-\hat{\myVec{s}}_t)$
follows a chi-square distribution with $m$ degrees of freedom. This yields the initial confidence region
\begin{equation}
\label{eqn:ellipse_init}
\!\hat{\mySet{C}}_{t,\mathrm{G}}^{\alpha}(\hat{\myVec{s}}_t,\myMat{\Sigma}_t)\! =\! \left\{\myVec{s}\in\mathbb{R}^{m}  : \|\myVec{s}-\hat{\myVec{s}}_t\|_{\myMat{\Sigma}_t}^2 \! \leq F^{-1}_{\chi_m^2}(1\!\!-\alpha) \right\},
\end{equation}
where $F^{-1}_{\chi_m^2}$ denotes the inverse cumulative distribution function of a chi-square random variable with $m$ degrees of freedom. The resulting ellipsoid adapts its size, orientation, and eccentricity to the uncertainty encoded in $\myMat{\Sigma}_t$. When the posterior is indeed Gaussian, \eqref{eqn:ellipse_init} coincides with its $(1-\alpha)$ highest-density region. 

\smallskip
{\bf Nonconformity Score:} 
 We apply the underlying Kalman-type filter to each  trajectory in $\mySet{D}_{\rm cal}$ to obtain the corresponding posterior moments $\{(\hat{\myVec{s}}_t^{(i)},\myMat{\Sigma}_t^{(i)})\}$.
The nonconformity score is 
\begin{equation}
\label{eqn:residuals_Ellipse}
R_t^{(i)}  = \big\|\myVec{s}_t^{(i)}-\hat{\myVec{s}}_t^{(i)}\big\|_{\myMat{\Sigma}_t^{(i)}}^2  
- F^{-1}_{\chi_m^2}(1-\alpha).
\end{equation}
Thus, $R_t^{(i)}$ measures the signed deviation of the true state from the boundary of the initial Gaussian region: positive values correspond to states lying outside the region, while negative values indicate states lying inside it.

\smallskip
{\bf  Evaluation:}
At evaluation time, the Kalman-type filter processes the newly observed sequence and produces $(\hat{\myVec{s}}_t,\myMat{\Sigma}_t)$. The  \ac{cgkf} region is obtained by adjusting the Gaussian Mahalanobis threshold according to the calibration quantile, i.e., 
\begin{align}
&\mySet{C}_{t,{\mathrm{G}}}^\alpha (\hat{\myVec{s}}_t,\myMat{\Sigma}_t)\notag \\ 
&= \left\{ \myVec{s}\in\mathbb{R}^{m} :  \|\myVec{s}-\hat{\myVec{s}}_t\|_{\myMat{\Sigma}_t}^2\! \leq F^{-1}_{\chi_m^2}(1\!-\!\alpha) \!+\! \hat{Q}^{1-\alpha}(t) \right\}.
\label{eqn:ellipse_cp_region}
\end{align}
 A positive conformal correction expands the nominal Gaussian region, whereas a negative correction contracts it. In both cases, the shape and orientation of the region remain determined by the covariance propagated by the Kalman-type filter, while its effective radius is calibrated from data.

\subsubsection{\ac{cqkf}}
\label{subsubsec:cqkf}
The second method replaces the Gaussian posterior approximation used by \ac{cgkf} with a learned quantile-based characterization of the state uncertainty. Specifically, the posterior moments $(\hat{\myVec{s}}_t,\myMat{\Sigma}_t)$ produced by the Kalman-type filter are treated as input features to a \ac{dqr} model~\cite{feldman2023calibrated}. This allows the resulting confidence regions to adapt to asymmetric and non-Gaussian posterior distributions without imposing a parametric distributional model. Unlike \ac{cgkf}, \ac{cqkf} requires the available data \eqref{eqn:dataset} to be divided into disjoint training and calibration sets, denoted $\mySet{D}_{\rm train}$ and $\mySet{D}_{\rm cal}$, respectively.  

\smallskip
{\bf Initial Prediction Region:}
We apply the selected Kalman-type filter to each trajectory in $\mySet{D}_{\rm train}$, obtaining the posterior moments
$\{(\hat{\myVec{s}}_t^{(i)},\myMat{\Sigma}_t^{(i)})\}$.
Given a finite set of unit-norm projection directions
$\mySet{U}\subset\mathbb{S}^{m-1}$, a directional quantile regression \ac{dnn}
$\hat{\mu}_{\myVec{\theta}}$ (which in our numerical study is set to a \ac{mlp} with two hidden layers) is trained to map the posterior moments to a vector of directional quantiles, where each output dimension corresponds to a fixed direction $\myVec{u}\in\mySet{U}$
Following the formulation in \eqref{eqn:cr_leaning}, its parameters $\hat{\myVec{\theta}}$ are trained based on the empirical risk
\begin{align*}
\arg\min_{\myVec{\theta}} \sum_{i\in\mySet{D}_{\rm train}} \sum_{t=1}^{T} \sum_{\myVec{u}\in\mySet{U}} \rho_{\alpha} \left( \myVec{u}^{\top}\myVec{s}_t^{(i)}, \hat{\mu}_{\myVec{\theta}} \big( (\hat{\myVec{s}}_t^{(i)},\myMat{\Sigma}_t^{(i)}), \myVec{u} \big)\right), 
\end{align*}
where $\rho_{\alpha}$ is defined in \eqref{eqn:Pinball}. For a new pair of posterior moments, the learned directional quantiles define the initial  region
\begin{align}
\!\!\hat{\mySet{C}}_{t,\mathrm{Q}}^{\alpha} (\hat{\myVec{s}}_t,\myMat{\Sigma}_t) \!=\!
\bigcap_{\myVec{u}\in\mySet{U}} \!\left\{ \myVec{s}\in\mathbb{R}^{m}: \myVec{u}^{\top}\myVec{s} \!\geq\! \hat{\mu}_{\hat{\myVec{\theta}}} \big( (\hat{\myVec{s}}_t,\myMat{\Sigma}_t), \myVec{u} \big) \right\}.
\label{eqn:cqkf_initial_region}
\end{align}
As an intersection of directional half-spaces, \eqref{eqn:cqkf_initial_region} is convex. 
Its size and geometry vary with the moments produced by the Kalman-type filter and, through data-driven training, can capture asymmetric uncertainty not represented by the Gaussian ellipsoid in \eqref{eqn:ellipse_init}.

\smallskip
{\bf Nonconformity Score:}
The Kalman-type filter is applied to each calibration trajectory in $\mySet{D}_{\rm cal}$ to obtain $\{(\hat{\myVec{s}}_t^{(i)},\myMat{\Sigma}_t^{(i)})\}$.
For each state and time instance, the conformity score is defined as the largest violation of the learned directional constraints:
\begin{equation}
\label{eq:Residual_dqr_kalman}
R_t^{(i)}  = \max_{\myVec{u}\in\mySet{U}} \left\{ \hat{\mu}_{\hat{\myVec{\theta}}} \big( (\hat{\myVec{s}}_t^{(i)},\myMat{\Sigma}_t^{(i)}), \myVec{u} \big) - \myVec{u}^{\top}\myVec{s}_t^{(i)} \right\}.
\end{equation}
 The scores in \eqref{eq:Residual_dqr_kalman} are mapped into the correction terms $\hat{Q}^{1-\alpha}(t)$ using the sample-wise or trajectory-wise calibration procedure.

\smallskip
{\bf Evaluation:}
At evaluation time, the trained quantile model receives the posterior moments $(\hat{\myVec{s}}_t,\myMat{\Sigma}_t)$ produced by the Kalman-type filter. The conformal correction relaxes each directional constraint by $\hat{Q}^{1-\alpha}(t)$, yielding
\begin{align}
&\mySet{C}_{t,\mathrm{Q}}^{\alpha} (\hat{\myVec{s}}_t,\myMat{\Sigma}_t) \notag \\
&=
\bigcap_{\myVec{u}\in\mySet{U}}\! \left\{ \myVec{s}\in\mathbb{R}^{m}: \myVec{u}^{\top}\myVec{s} \!\geq\! \hat{\mu}_{\hat{\myVec{\theta}}} \big( (\hat{\myVec{s}}_t,\myMat{\Sigma}_t), \myVec{u} \big)\! -\! \hat{Q}^{1-\alpha}(t) \right\}.
\label{eqn:cqkf_cp_region}
\end{align}
A positive correction enlarges the initial region by relaxing all its half-space constraints, whereas a negative correction contracts it. The resulting confidence region remains convex, while its boundaries adapt to the state- and time-dependent uncertainty encoded in the moments produced by the Kalman-type filter.

\subsubsection{\ac{cdkf}}
\label{subsubsec:cdkf}

\ac{cdkf} learns a conditional representation of the state posterior from the moments produced by the Kalman-type filter. Specifically, $(\hat{\myVec{s}}_t,\myMat{\Sigma}_t)$ are used as input features to a conditional density estimator
$\hat{f}_{\myVec{\theta}}(\myVec{s}_t\mid \hat{\myVec{s}}_t,\myMat{\Sigma}_t)$,
with the corresponding state $\myVec{s}_t$ serving as the target. In contrast to \ac{cqkf}, which is restricted to convex confidence regions, the learned density can represent multimodal posterior distributions and consequently induce non-convex, and possibly disconnected, confidence regions. As in \ac{cqkf}, the available data are divided into disjoint training and calibration sets, denoted $\mySet{D}_{\rm train}$ and $\mySet{D}_{\rm cal}$, respectively.

\smallskip
{\bf Initial Prediction Region:}
We apply the selected Kalman-type filter to each trajectory in $\mySet{D}_{\rm train}$ to obtain the posterior moments
$\{(\hat{\myVec{s}}_t^{(i)},\myMat{\Sigma}_t^{(i)})\}$.
A dedicated \ac{dnn} with parameters $\myVec{\theta}$ maps these moments into the parameters of a $K$-component Gaussian mixture model (specifically, in our numerical study we used a mixture density network comprising a three-layered \ac{mlp}). In particular, the resulting conditional density is expressed as
\begin{align*}
&\hat{f}_{\myVec{\theta}} \left( \myVec{s}\mid \hat{\myVec{s}}_t,\myMat{\Sigma}_t\right)
\notag\\
&\quad =\sum_{k=1}^{K}\pi_{k,\myVec{\theta}}\left(\hat{\myVec{s}}_t,\myMat{\Sigma}_t\right) \mathcal{N}\left(\myVec{s};\myVec{\mu}_{k,\myVec{\theta}}\left(\hat{\myVec{s}}_t,\myMat{\Sigma}_t\right),\myMat{\Gamma}_{k,\myVec{\theta}} \left(\hat{\myVec{s}}_t,\myMat{\Sigma}_t\right)\right), 
\end{align*}
where $\pi_{k,\myVec{\theta}}$, $\myVec{\mu}_{k,\myVec{\theta}}$, and $\myMat{\Gamma}_{k,\myVec{\theta}}$
denote the mixture weight, mean, and covariance of the $k$th component, respectively, taken from the output of the \ac{dnn}, with $\pi_{k,\myVec{\theta}}\geq 0$ and $\sum_{k=1}^{K}\pi_{k,\myVec{\theta}}=1$.

The network parameters are trained by maximizing the empirical  log-likelihood over $\mySet{D}_{\rm train}$:
\begin{align*}
\arg\max_{\myVec{\theta}} \sum_{i\in\mySet{D}_{\rm train}} \sum_{t=1}^{T}\log \hat{f}_{\myVec{\theta}} \left( \myVec{s}_t^{(i)} \,\middle|\, \hat{\myVec{s}}_t^{(i)}, \myMat{\Sigma}_t^{(i)} \right). 
\end{align*}
The learned  parameters $\hat{\myVec{\theta}}$ yield a density that  defines a nested family of candidate prediction regions, parameterized by a  threshold $q$, as
\begin{align}
\hat{\mySet{C}}_{t,\mathrm{D}} \left( q; \hat{\myVec{s}}_t,\myMat{\Sigma}_t \right)
 =
\left\{ \myVec{s}\in\mathbb{R}^{m}: \log \hat{f}_{\hat{\myVec{\theta}}} \left( \myVec{s} \,\middle|\, \hat{\myVec{s}}_t,\myMat{\Sigma}_t \right) \geq q \right\}.
\label{eqn:cdkf_initial_region}
\end{align}
 Owing to the Gaussian-mixture representation, its geometry can adapt to asymmetric and multimodal posterior distributions.

\smallskip
{\bf Nonconformity Score:}
The Kalman-type filter is applied to each calibration trajectory in $\mySet{D}_{\rm cal}$ to obtain
$\{(\hat{\myVec{s}}_t^{(i)},\myMat{\Sigma}_t^{(i)})\}$.
Following the \ac{dcp} formulation, the nonconformity score is the negative log-likelihood assigned to the true state:
\begin{equation}
\label{eq:residual_drcp_kalman}
R_t^{(i)} = -\log \hat{f}_{\hat{\myVec{\theta}}} \left( \myVec{s}_t^{(i)} \,\middle|\, \hat{\myVec{s}}_t^{(i)}, \myMat{\Sigma}_t^{(i)} \right).
\end{equation}
Thus, states assigned high probability by the learned  posterior yield small nonconformity scores. The scores in \eqref{eq:residual_drcp_kalman} are mapped into $\hat{Q}^{1-\alpha}(t)$ using the common calibration procedure.

\smallskip
{\bf Evaluation:}
At evaluation time, the Kalman-type filter produces the moments
$(\hat{\myVec{s}}_t,\myMat{\Sigma}_t)$, which are mapped by the trained \ac{dnn} into the parameters of the Gaussian-mixture posterior approximation. The calibrated \ac{cdkf} region is then given by
\begin{align}
\!\!\!\mySet{C}_{t,\mathrm{D}}^{\alpha}\! \left( \hat{\myVec{s}}_t,\myMat{\Sigma}_t \right)
  \!=\!
\left\{ \myVec{s}\in\mathbb{R}^{m}: \log \hat{f}_{\hat{\myVec{\theta}}} \left( \myVec{s} \,\middle|\, \hat{\myVec{s}}_t,\myMat{\Sigma}_t \right)\! \geq\! -\hat{Q}^{1\!-\!\alpha}(t) \right\}.
\label{eqn:cdkf_cp_region}
\end{align}
The resulting region adapts to the posterior representation inferred from the Kalman moments and may consist of multiple disconnected components, allowing it to characterize uncertainty patterns that cannot be represented by the ellipsoidal or convex regions of \ac{cgkf} and \ac{cqkf}, respectively.

\subsection{Theoretical Guarantees}
\label{subsec: Guarantees}
Although \ac{cgkf}, \ac{cqkf}, and \ac{cdkf} differ in their construction, data requirements, and supported confidence-region geometries, they share the same statistical validity guarantees. These guarantees follow from the common calibration procedure  and do not depend on  the method-specific nonconformity score. 

\subsubsection{Exchangeability}
We first formalize the exchangeability condition underlying our coverage results.

\begin{definition}[Exchangeability]
\label{def:exch}
Let $\myVec{X}^{(i)} \triangleq \big\{\bigl(\myVec{s}_t^{(i)},\myVec{z}_t^{(i)}\bigr) \big\}_{t=1}^{T}$,  $i=1,\ldots,N$, denote a collection of trajectories. The trajectories $\{\myVec{X}^{(i)}\}_{i=1}^{N}$ are said to be \emph{exchangeable} if their joint distribution is invariant under permutations of the trajectory indices. Namely, for every permutation $\pi$ of $\{1,\ldots,N\}$,
\begin{equation}
\left( \myVec{X}^{(1)},\ldots,\myVec{X}^{(N)} \right) \overset{d}{=} \left( \myVec{X}^{(\pi(1))},\ldots,\myVec{X}^{(\pi(N))} \right),
\label{eqn:exchangeability}
\end{equation}
where $\overset{d}{=}$ denotes equality in distribution.
\end{definition}

Exchangeability is weaker than i.i.d. sampling, as every collection of i.i.d. trajectories is exchangeable, while exchangeable trajectories need not be mutually independent.  Definition~\ref{def:exch} concerns exchangeability \emph{across trajectories}; it does not impose independence or stationarity across the time indices within each trajectory, which may exhibit  temporal dependencies induced by the underlying \ac{ss} model. For the learning-based \ac{cqkf} and \ac{cdkf} methods, the following results are understood conditionally on the disjoint training set $\mySet{D}_{\rm train}$ and on the resulting learned parameters. Thus, the Kalman-type filtering and conformalization mechanisms are fixed before processing the calibration and test trajectories.

\subsubsection{Sample-Wise Calibration}
We first consider the sample-wise calibration procedure in \eqref{eqn:SmpWise}, for which we obtain the following coverage guarantee:

\begin{theorem}[Sample-wise coverage]
\label{thm:c-kf-samplewise}
Fix one of the proposed conformalization methods and a Kalman-type filter. Let
$ \mySet{D}_{\rm cal} \cup     \left\{ \bigl(\myVec{s}_t,\myVec{z}_t\bigr) \right\}_{t=1}^{T}$,
be exchangeable calibration and test trajectories generated by the \ac{ss} model \eqref{eq:ssModel}. For a prescribed miscoverage level $\alpha\in(0,1)$, let the correction terms $\hat{Q}^{1-\alpha}(t)$ be computed according to \eqref{eqn:SmpWise}. Then, the resulting confidence regions satisfy
\begin{equation}
\Pr\left( \myVec{s}_t \in \mySet{C}_t^{\alpha} \bigl(\hat{\myVec{s}}_t,\myMat{\Sigma}_t\bigr) \right) \geq 1-\alpha, \qquad
\forall t\in\{1,\ldots,T\}.
\label{eqn:c-kf-samplewise-coverage}
\end{equation}
Moreover, if the calibration and test nonconformity scores at time $t$ are almost surely distinct, then $\forall t\in\{1,\ldots,T\}$ it holds that
\begin{align}
\Pr\left( \myVec{s}_t \in \mySet{C}_t^{\alpha} \bigl(\hat{\myVec{s}}_t,\myMat{\Sigma}_t\bigr) \right) &\leq 1-\alpha+ \frac{1}{|\mySet{D}_{\rm cal}|+1}.
\label{eqn:c-kf-samplewise-upper}
\end{align}
\end{theorem}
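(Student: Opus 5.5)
The plan is to reduce the statement to the standard split-conformal quantile lemma, applied separately at each fixed time index $t$. First, I would show that for every method the event $\{\myVec{s}_t\in\mySet{C}_t^{\alpha}(\hat{\myVec{s}}_t,\myMat{\Sigma}_t)\}$ coincides with $\{R_t\le \hat{Q}^{1-\alpha}(t)\}$, where $R_t$ is the test score obtained by applying the same score map to the test trajectory:
\begin{itemize}
\item For \ac{cgkf}, this is immediate from \eqref{eqn:residuals_Ellipse} and \eqref{eqn:ellipse_cp_region}.
\item For \ac{cqkf}, $\myVec{s}_t$ satisfies every relaxed half-space constraint in \eqref{eqn:cqkf_cp_region} if and only if $\max_{\myVec{u}\in\mySet{U}}\{\hat{\mu}_{\hat{\myVec{\theta}}}(\cdot,\myVec{u})-\myVec{u}^\top\myVec{s}_t\}\le \hat{Q}^{1-\alpha}(t)$, which is exactly \eqref{eq:Residual_dqr_kalman}.
\item For \ac{cdkf}, \eqref{eqn:cdkf_cp_region} is equivalent to $-\log\hat f_{\hat{\myVec{\theta}}}(\myVec{s}_t\mid\cdot)\le\hat{Q}^{1-\alpha}(t)$, which is \eqref{eq:residual_drcp_kalman}.
\end{itemize}
Coverage is therefore a statement purely about the rank of $R_t$ among $R_t^{(1)},\dots,R_t^{(N)}$, where $N=|\mySet{D}_{\rm cal}|$.

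Second, I would transfer exchangeability from trajectories to scores at time $t$. Conditioning on $\mySet{D}_{\rm train}$ (and hence on $\hat{\myVec{\theta}}$), the Kalman-type filter, with its fixed \ac{ss} model, is a deterministic measurable map from the observations of a trajectory to $(\hat{\myVec{s}}_t,\myMat{\Sigma}_t)$. The score $R_t$ is then a fixed measurable function $g_t(\myVec{X})$ of the whole trajectory, applied identically to each trajectory. Since permutation invariance in Definition~\ref{def:exch} is preserved under applying the same map coordinatewise, the vector $(R_t^{(1)},\dots,R_t^{(N)},R_t)$ is exchangeable. This holds irrespective of the temporal dependence inside each trajectory, because we only permute across trajectories.

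Third, I would invoke the usual rank argument. By exchangeability, and breaking ties uniformly at random, the rank of $R_t$ among the $N+1$ scores is uniform on $\{1,\dots,N+1\}$, or stochastically no larger once ties are resolved in favor of coverage. Since $\hat{Q}^{1-\alpha}(t)=R_t^{(k)}$ with $k=\lceil (N+1)(1-\alpha)\rceil$, we have $R_t\le R_t^{(k)}$ whenever $R_t$ is among the $k$ smallest of the augmented set. Hence $\Pr(R_t\le \hat{Q}^{1-\alpha}(t))\ge k/(N+1)\ge 1-\alpha$. I would take the convention $R_t^{(k)}=+\infty$ if $k>N$, so that the region is then all of $\mathbb{R}^m$. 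Under almost-sure distinctness, the rank is exactly uniform and the coverage event is exactly ``rank $\le k$''. This gives probability $k/(N+1)\le ((N+1)(1-\alpha)+1)/(N+1)=1-\alpha+1/(N+1)$, which is \eqref{eqn:c-kf-samplewise-upper}.

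The main obstacle is bookkeeping rather than depth: verifying the set-equivalence cleanly for all three methods, including the sign convention of the \ac{cdkf} threshold, and handling ties in the lower bound. For ties, I would argue directly that $\Pr(R_t\le R_t^{(k)})\ge \Pr(\text{at most }k-1\text{ calibration scores are strictly below }R_t)$ and bound the latter by a symmetrization over the $N+1$ exchangeable scores.
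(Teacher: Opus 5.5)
Your proposal is correct and takes the same route as the paper: fix $t$ and condition on the training set. You then identify the coverage event with $\{R_t \le \hat{Q}^{1-\alpha}(t)\}$, pass exchangeability from the trajectories to the time-$t$ scores, and apply the split-conformal quantile lemma for both bounds. The paper asserts the set equivalence ``by construction'' and cites the lemma of Romano et al.; you check the equivalence for each of the three scores and prove the rank argument explicitly, including ties and the $k>N$ convention.
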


\ifproofs
\begin{IEEEproof}
    The proof is provided in Appendix~\ref{app:proof1}.
\end{IEEEproof}
\fi

\smallskip
Theorem~\ref{thm:c-kf-samplewise} follows from the exchangeability of the scores at each fixed time instance. It indicates that all  proposed constructions achieve the prescribed marginal coverage, regardless of whether their initial regions are ellipsoidal, convex, or density-based. Furthermore, \eqref{eqn:c-kf-samplewise-upper} shows that their finite-sample coverage exceeds the target level by at most
$1/(|\mySet{D}_{\rm cal}|+1)$.

\subsubsection{Trajectory-Wise Calibration}
A mechanism calibrated for sample-wise coverage can also provide a trajectory-wise guarantee by tightening its prescribed miscoverage level, as stated in the following corollary:

\begin{corollary}[Trajectory-wise coverage using sample-wise quantiles]
\label{cor:c-kf-bonferroni}
Consider the setting of Theorem~\ref{thm:c-kf-samplewise}, and construct the sample-wise regions with $\tilde{\alpha} = \frac{\alpha}{T}$.  
Then,
\begin{equation}
\Pr\left( \myVec{s}_t \in \mySet{C}_t^{\tilde{\alpha}} \bigl(\hat{\myVec{s}}_t,\myMat{\Sigma}_t\bigr), \ \forall t\in\{1,\ldots,T\} \right)
\geq 1-\alpha.
\label{eqn:c-kf-bonferroni-coverage}
\end{equation}
\end{corollary}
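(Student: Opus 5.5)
The plan is a union bound on top of Theorem~\ref{thm:c-kf-samplewise}. Nothing about the joint law of the scores $\{R_t^{(i)}\}_{t=1}^{T}$ along a trajectory is needed; only the per-time marginal guarantee is used.

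First, invoke Theorem~\ref{thm:c-kf-samplewise} with miscoverage level $\tilde{\alpha}=\alpha/T\in(0,1)$. The corrections $\hat{Q}^{1-\tilde{\alpha}}(t)$ are computed via \eqref{eqn:SmpWise} with the order-statistic index $k=\lceil(|\mySet{D}_{\rm cal}|+1)(1-\tilde{\alpha})\rceil$. The theorem's hypotheses are unchanged: the method, the filter, the exchangeability of $\mySet{D}_{\rm cal}$ with the test trajectory, and conditioning on $\mySet{D}_{\rm train}$ for \ac{cqkf}/\ac{cdkf}. It therefore gives, for every $t\in\{1,\ldots,T\}$,
\[
\Pr\bigl(\myVec{s}_t\notin\mySet{C}_t^{\tilde{\alpha}}(\hat{\myVec{s}}_t,\myMat{\Sigma}_t)\bigr)\leq \tilde{\alpha}=\alpha/T .
\]

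Second, write the complement of the joint event in \eqref{eqn:c-kf-bonferroni-coverage} as the union $\bigcup_{t=1}^{T}\{\myVec{s}_t\notin\mySet{C}_t^{\tilde{\alpha}}\}$. Boole's inequality bounds its probability by $\sum_{t=1}^{T}\alpha/T=\alpha$, and taking complements yields the claim.

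The only point needing care is that the $T$ events live on one probability space, since all $T$ regions are built from the same calibration set. Because the union bound holds for arbitrarily dependent events, this dependence, like the temporal dependence within a trajectory, is harmless.

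Two edge cases should be noted rather than treated as obstacles. If $(|\mySet{D}_{\rm cal}|+1)(1-\alpha/T)$ exceeds $|\mySet{D}_{\rm cal}|$, the index $k$ exceeds $|\mySet{D}_{\rm cal}|$. The usual convention then sets $\hat{Q}=+\infty$, the region is all of $\mathbb{R}^m$, and the per-time bound holds trivially. The upper bound \eqref{eqn:c-kf-samplewise-upper} is not needed at all.
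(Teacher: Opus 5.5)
Your proposal is correct and matches the paper's proof: apply Theorem~\ref{thm:c-kf-samplewise} at level $\tilde{\alpha}=\alpha/T$, then use the union bound over the $T$ time instances and take complements. Your remark on the $k>|\mathcal{D}_{\rm cal}|$ case (setting $\hat{Q}=+\infty$) is a harmless addition. The paper raises this only indirectly, when it discusses the calibration-set size needed for the Bonferroni construction.
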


\ifproofs
\begin{IEEEproof}
The proof is provided in Appendix~\ref{app:proof2}.
\end{IEEEproof}
\fi

\smallskip
Corollary~\ref{cor:c-kf-bonferroni} follows by applying Theorem~\ref{thm:c-kf-samplewise} with level $\tilde{\alpha}$ and using the union bound over the $T$ time instances. This approach corresponds to a {\em Bonferroni-type} correction~\cite{bonferroni1936teoria}. It provides trajectory-wise confidence using the sample-wise calibration mechanism, but may be conservative since every individual region is constructed with the stricter confidence level $1-\alpha/T$.

Alternatively, trajectory-wise confidence can be obtained directly using the aggregated scores in \eqref{eqn:trajectory_score} and their corresponding quantile in \eqref{eqn:TrajWise}, as stated next:

\begin{theorem}[Trajectory-wise coverage]
\label{thm:c-kf-trajectorywise}
Under the same setting as in Theorem~\ref{thm:c-kf-samplewise}, let the conformal correction be computed using the trajectory-wise calibration procedure in \eqref{eqn:trajectory_score}--\eqref{eqn:TrajWise}. Then, the resulting confidence regions satisfy
\begin{equation}
\Pr\left( \myVec{s}_t \in \mySet{C}_t^{\alpha} \bigl(\hat{\myVec{s}}_t,\myMat{\Sigma}_t\bigr), \ \forall t\in\{1,\ldots,T\} \right)
\geq 1-\alpha.
\label{eqn:c-kf-trajectorywise-coverage}
\end{equation}
\end{theorem}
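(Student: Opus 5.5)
The plan is to reduce trajectory-wise coverage to the standard split-conformal quantile lemma applied to a single scalar score per trajectory. Define the test trajectory's aggregate score $R^{(N+1)}=\max_{t\in\{1,\ldots,T\}}R_t$, where $R_t$ is the method-specific nonconformity score in \eqref{eqn:residuals_Ellipse}, \eqref{eq:Residual_dqr_kalman}, or \eqref{eq:residual_drcp_kalman}, evaluated on the test trajectory with its Kalman moments $(\hat{\myVec{s}}_t,\myMat{\Sigma}_t)$. Here $N=|\mySet{D}_{\rm cal}|$.

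The first step is to verify, for each of the three constructions, the equivalence
\[
\myVec{s}_t\in\mySet{C}_t^{\alpha}(\hat{\myVec{s}}_t,\myMat{\Sigma}_t)\iff R_t\le\hat{Q}^{1-\alpha}.
\]
This is immediate from the definitions. For \ac{cgkf}, \eqref{eqn:ellipse_cp_region} is exactly $\|\myVec{s}_t-\hat{\myVec{s}}_t\|^2_{\myMat{\Sigma}_t}-F^{-1}_{\chi^2_m}(1-\alpha)\le\hat{Q}^{1-\alpha}$. For \ac{cqkf}, all the relaxed half-space constraints in \eqref{eqn:cqkf_cp_region} hold if and only if their maximal violation is at most $\hat{Q}^{1-\alpha}$. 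For \ac{cdkf}, \eqref{eqn:cdkf_cp_region} is the event $-\log\hat f\le\hat{Q}^{1-\alpha}$. Since the trajectory-wise threshold is constant in $t$, joint coverage over all $t$ is then equivalent to $\max_t R_t\le\hat{Q}^{1-\alpha}$, i.e.\ to $R^{(N+1)}\le R^{(k)}$ with $k$ as in \eqref{eqn:conformal_order_statistic}.

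The second step, which I expect to need the most care, is to establish exchangeability of the aggregated scores $R^{(1)},\ldots,R^{(N+1)}$. Conditionally on $\mySet{D}_{\rm train}$, and hence on $\hat{\myVec{\theta}}$, each $R^{(i)}$ is obtained by applying the same deterministic map $g$ to its own trajectory $\myVec{X}^{(i)}$. That map runs the fixed Kalman-type filter on $\myVec{z}^{(i)}_{1:T}$, computes the per-time scores, and takes their maximum. It does not depend on the other trajectories or on the index $i$. Applying a common measurable map coordinatewise preserves the permutation invariance in \eqref{eqn:exchangeability}, so the scores $(g(\myVec{X}^{(1)}),\ldots,g(\myVec{X}^{(N+1)}))$ are exchangeable. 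I will note explicitly that no within-trajectory independence is needed, because the temporal dependence is absorbed inside $g$. For the learned methods, I will also note that the training set is disjoint from the calibration and test trajectories, so conditioning on it leaves exchangeability intact.

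The final step is the usual rank argument. By exchangeability, the rank of $R^{(N+1)}$ among the $N+1$ scores is stochastically dominated by the uniform distribution on $\{1,\ldots,N+1\}$, with ties broken conservatively. Hence $\Pr\bigl(R^{(N+1)}\le R^{(k)}\bigr)\ge k/(N+1)\ge1-\alpha$. This is exactly the argument used for Theorem~\ref{thm:c-kf-samplewise}, now applied to the aggregated scores instead of the time-$t$ scores, so I can invoke that proof verbatim. If $k>N$, the convention $R^{(k)}=\infty$ makes the bound trivial. Combining this with the equivalence from the first step yields \eqref{eqn:c-kf-trajectorywise-coverage}.
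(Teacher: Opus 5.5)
Your proposal is correct and follows the same route as the paper's proof. Joint coverage is rewritten as the event that the test trajectory's maximal score does not exceed $\hat Q^{1-\alpha}$, the per-trajectory maxima are exchangeable, and the standard conformal empirical-quantile lemma gives the $1-\alpha$ bound; you additionally spell out details the paper leaves implicit, namely the score--region equivalence for each of the three constructions and why exchangeability survives (each maximal score is the same fixed map applied to its own trajectory).
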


\ifproofs
\begin{IEEEproof}
The proof is provided in Appendix~\ref{app:proof3}.
\end{IEEEproof}
\fi

\smallskip
Theorem~\ref{thm:c-kf-trajectorywise} follows by treating the maximum nonconformity score of each trajectory as a single exchangeable score. It thus controls the probability that any state along the test trajectory lies outside its corresponding confidence region. Unlike Corollary~\ref{cor:c-kf-bonferroni}, it uses a single $(1-\alpha)$ quantile of the trajectory-level scores and does not separately tighten the confidence level at every time instance. It also has a substantially milder calibration-data requirement: selecting a finite $(1-\alpha)$ conformal quantile requires
$|\mySet{D}_{\rm cal}| \geq \frac{1-\alpha}{\alpha}$, 
whereas the Bonferroni construction at level $\tilde{\alpha}=\alpha/T$ requires
$|\mySet{D}_{\rm cal}| \geq \frac{1-\alpha/T}{\alpha/T} =
\frac{T}{\alpha}-1$.
The direct trajectory-wise construction is therefore generally preferable when joint coverage over a complete horizon is required. Nevertheless, neither validity result guarantees that the resulting regions are tight. Their geometry and efficiency depend on the  uncertainty representation and the nonconformity scores.

\subsection{Discussion}
\label{subsec: Discussion} 

Our conformalized Kalman filtering framework  addresses the challenges identified in Subsection~\ref{sec:Problem Formulation}. In particular, the confidence regions are not required to coincide with the Gaussian posterior implied by the first- and second-order moments, allowing their coverage to remain valid under nonlinear and non-Gaussian dynamics as in~\ref{itm:nonlinear}, as well as under model mismatch as in~\ref{itm:mismatch}. The proposed methods retain the recursive statistical information produced by the underlying Kalman-type filter: the moments $(\hat{\myVec{s}}_t,\myMat{\Sigma}_t)$ adapt with the evolving dynamics and observations, and thus provide time-dependent features from which the confidence regions are constructed. This also makes the framework largely agnostic to the particular Kalman-type algorithm, addressing~\ref{itm:Alg}, as the conformalization operates on its posterior moment estimates rather than on its internal recursion. 

The three proposed constructions provide different levels of flexibility within this framework: \ac{cgkf} is most natural when the posterior geometry is reasonably captured by a Gaussian approximation, \ac{cqkf} accommodates asymmetric and non-Gaussian uncertainty through learned convex regions, while \ac{cdkf} further supports multimodal distributions and non-convex confidence regions. The validity guarantees of Subsection~\ref{subsec: Guarantees} apply to all three constructions irrespective of these modeling choices; their main effect is therefore on the tightness and geometry of the resulting regions.

{\bf Complexity Analysis:} 
The increased flexibility of our constructions comes with a  computational and data-complexity tradeoff. All methods incur the cost of the underlying Kalman-type filter, while \ac{cgkf} adds only the computation of a Mahalanobis-type score and its calibrated threshold, and requires no additional training; it is thus the least computationally demanding and allows all available labeled trajectories to be used for calibration.  \ac{cqkf} and \ac{cdkf} require an additional offline learning stage and divide the available data between training and calibration. For \ac{cqkf}, the additional inference cost is dominated by evaluating the directional quantile \ac{dnn} and forming the constraints associated with the directions in $\mySet{U}$; hence, its complexity and the resolution of the resulting convex region both increase with $|\mySet{U}|$. For \ac{cdkf}, the \ac{dnn} predicts the parameters of a $K$-component Gaussian mixture, and evaluation requires computing the  mixture density, with complexity increasing with both $K$ and the state dimension through the  covariance matrices. Thus, \ac{cgkf} offers the lowest-complexity construction at the expense of restricted geometry, while the learning-based \ac{cqkf} and \ac{cdkf} trade additional training and inference complexity for more expressive confidence regions. These tradeoffs are quantified numerically in Section~\ref{sec: Numerical Study}. 

{\bf Extensions:}
Several extensions of our framework are left for future investigation. First, the guarantees established in Subsection~\ref{subsec: Guarantees} rely on exchangeability between the calibration and test trajectories. While this assumption permits arbitrary temporal dependence within each trajectory, it may be violated when the underlying operating conditions change. Extending conformalized Kalman filtering to distribution shifts or other non-exchangeable settings can potentially be obtained by incorporating weighted, localized, or adaptive conformal mechanisms without assuming access to the true latent state~\cite{barber2023conformal}. Second, our calibration procedure is defined over the time horizon $T$ represented in the available trajectories. The sample-wise corrections $\hat{Q}^{1-\alpha}(t)$ are characterized for $t\in\{1,\ldots,T\}$, while the trajectory-wise guarantee also pertains to this prescribed horizon. Providing trustworthy regions when filtering continues beyond the horizon represented in the calibration data requires mechanisms that can extrapolate or otherwise parameterize the temporal evolution of the conformal correction. Finally, the underlying idea of exploiting model-based posterior information as features for  calibration is not intrinsically limited to Kalman-type filters. Extending this principle to more general Bayesian filtering methods could enable conformal regions that exploit  posterior representations beyond first- and second-order moments.

\section{Numerical Study}
\label{sec: Numerical Study}

We numerically evaluate the proposed conformalized Kalman filtering framework in terms of its coverage reliability, statistical efficiency, and computational complexity\footnote{
The source code and the hyperparameters can be found online  at: \url{https://github.com/OlgaWeisman/CxKF}}. We consider scalar and multivariate \ac{ss} models spanning linear Gaussian, non-Gaussian, nonlinear, mismatched, and chaotic dynamics. The study is organized as follows. We first describe the considered models, benchmarks, and evaluation protocol. We then report the scalar-state and multivariate-state results, followed by a study of the computational complexity and the effect of the number of directional quantiles used by \ac{cqkf}.

\subsection{Experimental Setup}
\label{subsec:experimental_setup}

\noindent
{\bf State-Space Models and Filters:}
For all considered models, we denote by $q^2$ and $r^2$ the process- and measurement-noise levels associated with $\myMat{Q}_t$ and $\myMat{R}_t$, respectively. We define
\begin{equation}
    \mathrm{SNR}= 10\log_{10}\left(\frac{1}{r^2}\right)\;[\mathrm{dB}],
    \qquad
    \frac{q^2}{r^2}=0.01,
\label{eqn:numerical_snr}
\end{equation}
such that the process-to-measurement noise ratio remains fixed as the \ac{snr} varies. For linear systems we employ the \ac{kf}, and for  nonlinear ones we use the \ac{ekf} and \ac{ukf}, allowing us to evaluate the compatibility of the proposed approach with different Kalman-type filters as required by~\ref{itm:Alg}. 

For the scalar experiments, we consider three settings. The first is the linear Gaussian model $f_t(s)=0.9s$ and $h_t(s)=s$. The second uses the same linear dynamics with Laplacian rather than Gaussian noise, thus violating the Gaussianity assumption. The third is a nonlinear setting with $f_t(s)=\sin(s)$ and $h_t(s)=s^2$. In this case, we additionally introduce model mismatch by providing the filter with the approximation $\tilde f_t(s)=s$ while generating trajectories according to the true nonlinear transition $f_t(s)=\sin(s)$.

For the multivariate experiments, we consider the physically motivated nonlinear pendulum model from~\cite{dahan2024uncertainty}, whose state is
$\myVec{s}_t=[\theta_t,\dot{\theta}_t]^\top$ and whose transition and
observation functions are
\begin{equation}
\myVec{f}(\myVec{s}_t)=
\begin{bmatrix}
\theta_t+\dot{\theta}_t\Delta t\\
\dot{\theta}_t-\frac{g}{\ell}\sin(\theta_t)\Delta t
\end{bmatrix},
\quad
\myVec{h}(\myVec{s}_t)=
\begin{bmatrix}
\ell\cos(\theta_t)\\
\ell\sin(\theta_t)
\end{bmatrix},
\label{eqn:pendulum_numeric}
\end{equation}
with $\ell=1$ and $\Delta t=0.02$. 


Finally, we consider the three-dimensional Lorenz attractor as a challenging chaotic tracking scenario. Its continuous-time dynamics are represented as
\begin{equation}
    \frac{d\myVec{s}_{\tau}}{d\tau} = \myMat{A}(\myVec{s}_{\tau})\myVec{s}_{\tau},
    \quad
    \myMat{A}(\myVec{s}_{\tau})  =
    \begin{bmatrix}
        -10 & 10 & 0\\
        28 & -1 & -s_{1,\tau}\\
        0 & s_{1,\tau} & -8/3
    \end{bmatrix}.
\label{eqn:lorenz_numeric}
\end{equation}
As in \cite{revach2022kalmannet}, we discretize the dynamics with $\Delta\tau=0.02$ using $\myVec{s}_{t+1}=\myMat{F}(\myVec{s}_t)\myVec{s}_t$, where $\myMat{F}(\myVec{s}_t)=\exp(\myMat{A}(\myVec{s}_t)\Delta\tau)$ is approximated by a fifth-order Taylor expansion.

\newcommand{\SelectedSNR}{0}


\newcommand{\GetMetric}[4]{%
    \pgfplotstableread[col sep=comma]{#1}\datatable%
    \def\foundvalue{}%
    \pgfplotstableforeachcolumnelement{algo}\of\datatable\as\algovalue{%
        \pgfplotstablegetelem{\pgfplotstablerow}{snr_db}\of\datatable%
        \edef\snrvalue{\pgfplotsretval}%
        \IfStrEq{\algovalue}{#2}{%
            \pgfmathparse{abs(\snrvalue-(#3)) < 0.001 ? 1 : 0}%
            \ifnum\pgfmathresult=1
                \pgfplotstablegetelem{\pgfplotstablerow}{#4}\of\datatable%
                \xdef\foundvalue{\pgfplotsretval}%
            \fi
        }{}%
    }%
    \ifx\foundvalue\empty
        --%
    \else
        \pgfmathprintnumber[
            fixed,
            fixed zerofill,
            precision=2
        ]{\foundvalue}%
    \fi
}


\newcommand{\GetMetricThree}[4]{%
    \pgfplotstableread[col sep=comma]{#1}\datatable%
    \def\foundvalue{}%
    \pgfplotstableforeachcolumnelement{algo}\of\datatable\as\algovalue{%
        \pgfplotstablegetelem{\pgfplotstablerow}{snr_db}\of\datatable%
        \edef\snrvalue{\pgfplotsretval}%
        \IfStrEq{\algovalue}{#2}{%
            \pgfmathparse{abs(\snrvalue-(#3)) < 0.001 ? 1 : 0}%
            \ifnum\pgfmathresult=1
                \pgfplotstablegetelem{\pgfplotstablerow}{#4}\of\datatable%
                \xdef\foundvalue{\pgfplotsretval}%
            \fi
        }{}%
    }%
    \ifx\foundvalue\empty
        --%
    \else
        \pgfmathprintnumber[
            fixed,
            fixed zerofill,
            precision=2
        ]{\foundvalue}%
    \fi
}


\newcommand{\GetMetricThreeBold}[4]{%
    \pgfplotstableread[col sep=comma]{#1}\datatable%
    \def\foundvalue{}%
    \pgfplotstableforeachcolumnelement{algo}\of\datatable\as\algovalue{%
        \pgfplotstablegetelem{\pgfplotstablerow}{snr_db}\of\datatable%
        \edef\snrvalue{\pgfplotsretval}%
        \IfStrEq{\algovalue}{#2}{%
            \pgfmathparse{abs(\snrvalue-(#3)) < 0.001 ? 1 : 0}%
            \ifnum\pgfmathresult=1
                \pgfplotstablegetelem{\pgfplotstablerow}{#4}\of\datatable%
                \xdef\foundvalue{\pgfplotsretval}%
            \fi
        }{}%
    }%
    \ifx\foundvalue\empty
        --%
    \else
        {\bfseries\boldmath
        \pgfmathprintnumber[
            fixed,
            fixed zerofill,
            precision=2
        ]{\foundvalue}}%
    \fi
}


\newcommand{\PS}[3]{%
    \GetMetric{#1}{#2}{#3}{C_mean}%
}

\newcommand{\PT}[3]{%
    \GetMetric{#1}{#2}{#3}{S}%
}

\newcommand{\GetRawMetric}[5]{%
    \pgfplotstableread[col sep=comma]{#1}\datatable%
    \xdef#5{}%
    \pgfplotstableforeachcolumnelement{algo}\of\datatable\as\algovalue{%
        \pgfplotstablegetelem{\pgfplotstablerow}{snr_db}\of\datatable%
        \edef\snrvalue{\pgfplotsretval}%
        \IfStrEq{\algovalue}{#2}{%
            \pgfmathparse{abs(\snrvalue-(#3)) < 0.001 ? 1 : 0}%
            \ifnum\pgfmathresult=1
                \pgfplotstablegetelem{\pgfplotstablerow}{#4}\of\datatable%
                \xdef#5{\pgfplotsretval}%
            \fi
        }{}%
    }%
}

\newcommand{\GetNormalizedMetric}[5]{%
    \GetRawMetric{#1}{#2}{#3}{#4}{\methodvalue}%
    \GetRawMetric{#1}{#5}{#3}{WI_mean}{\referencevalue}%
    \ifx\methodvalue\empty
        --%
    \else
        \pgfmathparse{\methodvalue/\referencevalue}%
        \pgfmathprintnumber[fixed,fixed zerofill,precision=2]{\pgfmathresult}%
    \fi
}

\newcommand{\RS}[3]{%
    \def\referencealgo{KF-Gauss}%
    \IfSubStr{#2}{Tj}{\def\referencealgo{KF-Gauss-Bonf}}{}%
    \IfSubStr{#2}{TS}{\def\referencealgo{KF-Gauss-Bonf}}{}%
    \IfSubStr{#2}{Bonf}{\def\referencealgo{KF-Gauss-Bonf}}{}%
    \GetNormalizedMetric{#1}{#2}{#3}{WI_mean}{\referencealgo}%
    /%
    \GetNormalizedMetric{#1}{#2}{#3}{WI_svd}{\referencealgo}%
}

\newcommand{\RSB}[3]{%
    {\bfseries\boldmath
    \RS{#1}{#2}{#3}%
    }%
}


\begin{table*}[t]
\centering

\setlength{\tabcolsep}{5pt}
\renewcommand{\arraystretch}{1.05}
\footnotesize

\begin{adjustbox}{max width=\textwidth}

\begin{tabular}{@{}llcccc@{}}
\toprule

\multirow{2}{*}{\textbf{Scenario}}
&
\multirow{2}{*}{\textbf{Method}}
&
\multicolumn{2}{c}{\textbf{Sample-wise}}
&
\multicolumn{2}{c}{\textbf{Trajectory-wise}}
\\

\cmidrule(lr){3-4}
\cmidrule(lr){5-6}

&
&
\textbf{Per-sample Miscoverage [\%]}
&
\textbf{Normalized Interval Width}
&
\textbf{Per-trajectory Miscoverage [\%]}
&
\textbf{Normalized Interval Width}
\\

\midrule


\multirow{8}{*}{\textbf{Lin. Gauss.}}

& Gauss
& \PS{csvs/metrics_LC-Gauss.csv}{KF-Gauss}{\SelectedSNR}
& \RSB{csvs/metrics_LC-Gauss.csv}{KF-Gauss}{\SelectedSNR}
& --
& --
\\

& CGKF
& \PS{csvs/metrics_LC-Gauss.csv}{CGKF-sw}{\SelectedSNR}
& \RS{csvs/metrics_LC-Gauss.csv}{CGKF-sw}{\SelectedSNR}
& \PT{csvs/metrics_LC-Gauss.csv}{CGKF-TjW}{\SelectedSNR}
& \RSB{csvs/metrics_LC-Gauss.csv}{CGKF-TjW}{\SelectedSNR}
\\

& CQR
& \PS{csvs/metrics_LC-Gauss.csv}{CQR-sw}{\SelectedSNR}
& \RS{csvs/metrics_LC-Gauss.csv}{CQR-sw}{\SelectedSNR}
& \PT{csvs/metrics_LC-Gauss.csv}{CQR-Tjw}{\SelectedSNR}
& \RS{csvs/metrics_LC-Gauss.csv}{CQR-Tjw}{\SelectedSNR}
\\

& CQKF
& \PS{csvs/metrics_LC-Gauss.csv}{CQKF-sw}{\SelectedSNR}
& \RS{csvs/metrics_LC-Gauss.csv}{CQKF-sw}{\SelectedSNR}
& \PT{csvs/metrics_LC-Gauss.csv}{CQKF-Tjw}{\SelectedSNR}
& \RS{csvs/metrics_LC-Gauss.csv}{CQKF-Tjw}{\SelectedSNR}
\\

& Residuals-LCP
& --
& --
& \PT{csvs/metrics_LC-Gauss.csv}{LCP-TS}{\SelectedSNR}
& \RS{csvs/metrics_LC-Gauss.csv}{LCP-TS}{\SelectedSNR}
\\

& Residuals
& --
& --
& \PT{csvs/metrics_LC-Gauss.csv}{Union-TS}{\SelectedSNR}
& \RS{csvs/metrics_LC-Gauss.csv}{Union-TS}{\SelectedSNR}
\\

& Gauss-Bonf*
& --
& --
& \PT{csvs/metrics_LC-Gauss.csv}{KF-Gauss-Bonf}{\SelectedSNR}
& \RS{csvs/metrics_LC-Gauss.csv}{KF-Gauss-Bonf}{\SelectedSNR}
\\

& CGKF-Bonf*
& --
& --
& \PT{csvs/metrics_LC-Gauss.csv}{CGKF-Bonf-sw}{\SelectedSNR}
& \RS{csvs/metrics_LC-Gauss.csv}{CGKF-Bonf-sw}{\SelectedSNR}
\\

\midrule


\multirow{8}{*}{\textbf{Lin. Non-Gauss.}}

& Gauss
& \PS{csvs/metrics_LC-non-Gauss.csv}{KF-Gauss}{\SelectedSNR}
& \RS{csvs/metrics_LC-non-Gauss.csv}{KF-Gauss}{\SelectedSNR}
& --
& --
\\

& CGKF
& \PS{csvs/metrics_LC-non-Gauss.csv}{CGKF-sw}{\SelectedSNR}
& \RSB{csvs/metrics_LC-non-Gauss.csv}{CGKF-sw}{\SelectedSNR}
& \PT{csvs/metrics_LC-non-Gauss.csv}{CGKF-TjW}{\SelectedSNR}
& \RSB{csvs/metrics_LC-non-Gauss.csv}{CGKF-TjW}{\SelectedSNR}
\\

& CQR
& \PS{csvs/metrics_LC-non-Gauss.csv}{CQR-sw}{\SelectedSNR}
& \RS{csvs/metrics_LC-non-Gauss.csv}{CQR-sw}{\SelectedSNR}
& \PT{csvs/metrics_LC-non-Gauss.csv}{CQR-Tjw}{\SelectedSNR}
& \RS{csvs/metrics_LC-non-Gauss.csv}{CQR-Tjw}{\SelectedSNR}
\\

& CQKF
& \PS{csvs/metrics_LC-non-Gauss.csv}{CQKF-sw}{\SelectedSNR}
& \RS{csvs/metrics_LC-non-Gauss.csv}{CQKF-sw}{\SelectedSNR}
& \PT{csvs/metrics_LC-non-Gauss.csv}{CQKF-Tjw}{\SelectedSNR}
& \RS{csvs/metrics_LC-non-Gauss.csv}{CQKF-Tjw}{\SelectedSNR}
\\

& Residuals-LCP
& --
& --
& \PT{csvs/metrics_LC-non-Gauss.csv}{LCP-TS}{\SelectedSNR}
& \RS{csvs/metrics_LC-non-Gauss.csv}{LCP-TS}{\SelectedSNR}
\\

& Residuals
& --
& --
& \PT{csvs/metrics_LC-non-Gauss.csv}{Union-TS}{\SelectedSNR}
& \RS{csvs/metrics_LC-non-Gauss.csv}{Union-TS}{\SelectedSNR}
\\

& Gauss-Bonf*
& --
& --
& \textcolor{red}{%
    \PT{csvs/metrics_LC-non-Gauss.csv}{KF-Gauss-Bonf}{\SelectedSNR}}
& \RS{csvs/metrics_LC-non-Gauss.csv}{KF-Gauss-Bonf}{\SelectedSNR}
\\

& CGKF-Bonf*
& --
& --
& \PT{csvs/metrics_LC-non-Gauss.csv}{CGKF-Bonf-sw}{\SelectedSNR}
& \RS{csvs/metrics_LC-non-Gauss.csv}{CGKF-Bonf-sw}{\SelectedSNR}
\\

\midrule


\multirow{8}{*}{\textbf{Non-linear (Toy)}}

& Gauss
& \textcolor{red}{%
    \PS{csvs/metrics_non-Linear.csv}{KF-Gauss}{\SelectedSNR}}
& \RS{csvs/metrics_non-Linear.csv}{KF-Gauss}{\SelectedSNR}
& --
& --
\\

& CGKF
& \PS{csvs/metrics_non-Linear.csv}{CGKF-sw}{\SelectedSNR}
& \RS{csvs/metrics_non-Linear.csv}{CGKF-sw}{\SelectedSNR}
& \PT{csvs/metrics_non-Linear.csv}{CGKF-TjW}{\SelectedSNR}
& \RS{csvs/metrics_non-Linear.csv}{CGKF-TjW}{\SelectedSNR}
\\

& CQR
& \PS{csvs/metrics_non-Linear.csv}{CQR-sw}{\SelectedSNR}
& \RS{csvs/metrics_non-Linear.csv}{CQR-sw}{\SelectedSNR}
& \PT{csvs/metrics_non-Linear.csv}{CQR-Tjw}{\SelectedSNR}
& \RS{csvs/metrics_non-Linear.csv}{CQR-Tjw}{\SelectedSNR}
\\

& CQKF
& \PS{csvs/metrics_non-Linear.csv}{CQKF-sw}{\SelectedSNR}
& \RSB{csvs/metrics_non-Linear.csv}{CQKF-sw}{\SelectedSNR}
& \PT{csvs/metrics_non-Linear.csv}{CQKF-Tjw}{\SelectedSNR}
& \RSB{csvs/metrics_non-Linear.csv}{CQKF-Tjw}{\SelectedSNR}
\\

& Residuals-LCP
& --
& --
& \PT{csvs/metrics_non-Linear.csv}{LCP-TS}{\SelectedSNR}
& \RS{csvs/metrics_non-Linear.csv}{LCP-TS}{\SelectedSNR}
\\

& Residuals
& --
& --
& \PT{csvs/metrics_non-Linear.csv}{Union-TS}{\SelectedSNR}
& \RS{csvs/metrics_non-Linear.csv}{Union-TS}{\SelectedSNR}
\\

& Gauss-Bonf*
& --
& --
& \textcolor{red}{%
    \PT{csvs/metrics_non-Linear.csv}{KF-Gauss-Bonf}{\SelectedSNR}}
& \RS{csvs/metrics_non-Linear.csv}{KF-Gauss-Bonf}{\SelectedSNR}
\\

& CGKF-Bonf*
& --
& --
& \PT{csvs/metrics_non-Linear.csv}{CGKF-Bonf-sw}{\SelectedSNR}
& \RS{csvs/metrics_non-Linear.csv}{CGKF-Bonf-sw}{\SelectedSNR}
\\

\midrule

\multirow{8}{*}{\textbf{Non-linear (Toy) $\&$ Mismatch}}

& Gauss
& \textcolor{red}{%
    \PS{csvs/metrics_non-Linear-partial.csv}{KF-Gauss}{\SelectedSNR}}
& \RS{csvs/metrics_non-Linear-partial.csv}{KF-Gauss}{\SelectedSNR}
& --
& --
\\

& CGKF
& \PS{csvs/metrics_non-Linear-partial.csv}{CGKF-sw}{\SelectedSNR}
& \RS{csvs/metrics_non-Linear-partial.csv}{CGKF-sw}{\SelectedSNR}
& \PT{csvs/metrics_non-Linear-partial.csv}{CGKF-TjW}{\SelectedSNR}
& \RS{csvs/metrics_non-Linear-partial.csv}{CGKF-TjW}{\SelectedSNR}
\\

& CQR
& \PS{csvs/metrics_non-Linear-partial.csv}{CQR-sw}{\SelectedSNR}
& \RS{csvs/metrics_non-Linear-partial.csv}{CQR-sw}{\SelectedSNR}
& \PT{csvs/metrics_non-Linear-partial.csv}{CQR-Tjw}{\SelectedSNR}
& \RS{csvs/metrics_non-Linear-partial.csv}{CQR-Tjw}{\SelectedSNR}
\\

& CQKF
& \PS{csvs/metrics_non-Linear-partial.csv}{CQKF-sw}{\SelectedSNR}
& \RSB{csvs/metrics_non-Linear-partial.csv}{CQKF-sw}{\SelectedSNR}
& \PT{csvs/metrics_non-Linear-partial.csv}{CQKF-Tjw}{\SelectedSNR}
& \RSB{csvs/metrics_non-Linear-partial.csv}{CQKF-Tjw}{\SelectedSNR}
\\

& Residuals-LCP
& --
& --
& \PT{csvs/metrics_non-Linear-partial.csv}{LCP-TS}{\SelectedSNR}
& \RS{csvs/metrics_non-Linear-partial.csv}{LCP-TS}{\SelectedSNR}
\\

& Residuals
& --
& --
& \PT{csvs/metrics_non-Linear-partial.csv}{Union-TS}{\SelectedSNR}
& \RS{csvs/metrics_non-Linear-partial.csv}{Union-TS}{\SelectedSNR}
\\

& Gauss-Bonf*
& --
& --
& \textcolor{red}{%
    \PT{csvs/metrics_non-Linear-partial.csv}{KF-Gauss-Bonf}{\SelectedSNR}}
& \RS{csvs/metrics_non-Linear-partial.csv}{KF-Gauss-Bonf}{\SelectedSNR}
\\

& CGKF-Bonf*
& --
& --
& \PT{csvs/metrics_non-Linear-partial.csv}{CGKF-Bonf-sw}{\SelectedSNR}
& \RS{csvs/metrics_non-Linear-partial.csv}{CGKF-Bonf-sw}{\SelectedSNR}
\\

\bottomrule
\end{tabular}

\end{adjustbox}

\caption{
Performance comparison for scalar scenarios at $\mathrm{SNR}=0$~dB.
Each normalized interval-width entry reports the average value/STD.
For sample-wise evaluation, both quantities are normalized by the mean width
of the corresponding Gauss baseline. For trajectory-wise evaluation, they are
normalized by the mean width of the corresponding Gauss-Bonf baseline.
A value greater than one indicates an interval wider than the corresponding
Gaussian-posterior baseline.
Methods marked with * use an extended calibration set.
\textcolor{red}{Red} entries denote failure to meet the required
$\leq 5\%$ miscoverage.
A dash indicates that the corresponding guarantee is not provided
by the method.
}
\label{tab:results_1D}

\end{table*}

\smallskip
\noindent
{\bf Compared Methods:}
We evaluate the proposed \ac{cgkf}, \ac{cqkf}, and \ac{cdkf}. As an uncalibrated model-based reference, we use the Gaussian posterior reported by the Kalman-type filter, denoted \textbf{Gauss} for both scalar and multivariate states.
For trajectory-wise scalar evaluation, we also consider the Bonferroni variants \textbf{Gauss-Bonf} and \textbf{\ac{cgkf}-Bonf}, which use the sample-wise level $\tilde{\alpha}=\alpha/T$.

To compare with \ac{cp} methods that do not exploit the Kalman posterior, we use standard \ac{cqr} for scalar states, and \ac{dqr} and \ac{dcp} for multivariate states. These employ the observations as their predictive features, whereas \ac{cqkf} and \ac{cdkf} use the posterior moments $(\hat{\myVec{s}}_t,\myMat{\Sigma}_t)$. Thus, these comparisons isolate the benefit of incorporating the statistical information produced by the Kalman-type filter. For scalar trajectory-wise confidence, we additionally compare with the residual-based method of~\cite{cleaveland2024conformal}, using both its learned time-dependent weighting, denoted \textbf{Residuals-LCP}, and its unweighted counterpart, denoted \textbf{Residuals}. For multivariate states we further include \textbf{Rec}, which applies scalar \ac{cgkf} independently to each state coordinate and forms an axis-aligned rectangular region; this serves as a simple reference that does not exploit cross-coordinate dependencies.


\pgfplotstableread[col sep=comma]{csvs/metrics_linear.csv}\LinearData
\pgfplotstableread[col sep=comma]{csvs/metrics_nonlinear.csv}\NonlinearData
\pgfplotstableread[col sep=comma]{csvs/metrics_lorenz.csv}\LorenzData
\pgfplotstableread[col sep=comma]{csvs/metrics_nonlinear-UKF.csv}\NonlinearUKFData


\pgfplotstableread[col sep=comma]{csvs/metrics_linear_traj.csv}\LinearTrajData
\pgfplotstableread[col sep=comma]{csvs/metrics_nonlinear_traj.csv}\NonlinearTrajData
\pgfplotstableread[col sep=comma]{csvs/metrics_nonlinear_UKF_traj.csv}\NonlinearUKFTrajData
\pgfplotstableread[col sep=comma]{csvs/metrics_lorenz_traj.csv}\LorenzTrajData


\newcommand{\CSVnum}[3]{%
    \pgfplotstablegetelem{#2}{#3}\of#1%
    \pgfmathprintnumber[
        fixed,
        fixed zerofill,
        precision=2
    ]{\pgfplotsretval}%
}

%

\newcommand{\CSVregion}[2]{%
    \pgfplotstablegetelem{#2}{WI_mean}\of#1%
    \edef\RegionMean{\pgfplotsretval}%

    \pgfplotstablegetelem{#2}{WI_svd}\of#1%
    \edef\RegionStd{\pgfplotsretval}%

    \pgfplotstablegetelem{3}{WI_mean}\of#1%
    \edef\GaussianMean{\pgfplotsretval}%

    \pgfmathparse{\RegionMean/\GaussianMean}%
    \edef\NormalizedRegionMean{\pgfmathresult}%

    \pgfmathparse{\RegionStd/\GaussianMean}%
    \edef\NormalizedRegionStd{\pgfmathresult}%

    \pgfmathprintnumber[
        fixed,
        fixed zerofill,
        precision=2
    ]{\NormalizedRegionMean}%
    /%
    \pgfmathprintnumber[
        fixed,
        fixed zerofill,
        precision=2
    ]{\NormalizedRegionStd}%
}


\newcommand{\CSVRegionBold}[2]{%
    \pgfplotstablegetelem{#2}{WI_mean}\of#1%
    \edef\RegionMean{\pgfplotsretval}%

    \pgfplotstablegetelem{#2}{WI_svd}\of#1%
    \edef\RegionStd{\pgfplotsretval}%

    \pgfplotstablegetelem{3}{WI_mean}\of#1%
    \edef\GaussianMean{\pgfplotsretval}%

    \pgfmathparse{\RegionMean/\GaussianMean}%
    \edef\NormalizedRegionMean{\pgfmathresult}%

    \pgfmathparse{\RegionStd/\GaussianMean}%
    \edef\NormalizedRegionStd{\pgfmathresult}%

    {\bfseries\boldmath
    \pgfmathprintnumber[
        fixed,
        fixed zerofill,
        precision=2
    ]{\NormalizedRegionMean}%
    /%
    \pgfmathprintnumber[
        fixed,
        fixed zerofill,
        precision=2
    ]{\NormalizedRegionStd}%
    }%
}


\begin{table*}[!t]
\centering

\setlength{\tabcolsep}{5pt}
\renewcommand{\arraystretch}{1.05}
\footnotesize

\begin{adjustbox}{max width=\textwidth}

\begin{tabular}{llcc|cc}
\toprule

\multirow{2}{*}{\textbf{Scenario}} &
\multirow{2}{*}{\textbf{Method}} &
\multicolumn{2}{c|}{\textbf{Sample-wise}} &
\multicolumn{2}{c}{\textbf{Trajectory-wise}}
\\

\cmidrule(lr){3-4}
\cmidrule(lr){5-6}

&
&
\textbf{Per-sample Miscoverage [\%]} &
\textbf{Normalized Region Size} &
\textbf{Per-trajectory Miscoverage [\%]} &
\textbf{Normalized Region Size}
\\

\midrule


\multirow{7}{*}{\textbf{Linear Gaussian}}

& Gauss
& \CSVnum{\LinearData}{3}{C_mean}/\CSVnum{\LinearData}{3}{C_svd}
& \CSVregion{\LinearData}{3}
& --
& --
\\

& CGKF
& \CSVnum{\LinearData}{2}{C_mean}/\CSVnum{\LinearData}{2}{C_svd}
& \CSVRegionBold{\LinearData}{2}
& \CSVnum{\LinearTrajData}{2}{S}
& \CSVRegionBold{\LinearTrajData}{2}
\\

& Rec
& \CSVnum{\LinearData}{4}{C_mean}/\CSVnum{\LinearData}{4}{C_svd}
& \CSVregion{\LinearData}{4}
& \CSVnum{\LinearTrajData}{4}{S}
& \CSVregion{\LinearTrajData}{4}
\\

& DQR
& \CSVnum{\LinearData}{1}{C_mean}/\CSVnum{\LinearData}{1}{C_svd}
& \CSVregion{\LinearData}{1}
& \CSVnum{\LinearTrajData}{1}{S}
& \CSVregion{\LinearTrajData}{1}
\\

& CQKF
& \CSVnum{\LinearData}{0}{C_mean}/\CSVnum{\LinearData}{0}{C_svd}
& \CSVregion{\LinearData}{0}
& \CSVnum{\LinearTrajData}{0}{S}
& \CSVregion{\LinearTrajData}{0}
\\

& DCP
& \CSVnum{\LinearData}{6}{C_mean}/\CSVnum{\LinearData}{6}{C_svd}
& \CSVregion{\LinearData}{6}
& \CSVnum{\LinearTrajData}{6}{S}
& \CSVregion{\LinearTrajData}{6}
\\

& CDKF
& \CSVnum{\LinearData}{5}{C_mean}/\CSVnum{\LinearData}{5}{C_svd}
& \CSVregion{\LinearData}{5}
& \CSVnum{\LinearTrajData}{5}{S}
& \CSVregion{\LinearTrajData}{5}
\\

\midrule


\multirow{7}{*}{\textbf{Pendulum}}

& Gauss
& \textcolor{red}{%
    \CSVnum{\NonlinearData}{3}{C_mean}/%
    \CSVnum{\NonlinearData}{3}{C_svd}}
& \CSVregion{\NonlinearData}{3}
& --
& --
\\

& CGKF
& \CSVnum{\NonlinearData}{2}{C_mean}/\CSVnum{\NonlinearData}{2}{C_svd}
& \CSVregion{\NonlinearData}{2}
& \CSVnum{\NonlinearTrajData}{2}{S}
& \CSVregion{\NonlinearTrajData}{2}
\\

& Rec
& \CSVnum{\NonlinearData}{4}{C_mean}/\CSVnum{\NonlinearData}{4}{C_svd}
& \CSVregion{\NonlinearData}{4}
& \CSVnum{\NonlinearTrajData}{4}{S}
& \CSVregion{\NonlinearTrajData}{4}
\\

& DQR
& \CSVnum{\NonlinearData}{1}{C_mean}/\CSVnum{\NonlinearData}{1}{C_svd}
& \CSVregion{\NonlinearData}{1}
& \CSVnum{\NonlinearTrajData}{1}{S}
& \CSVregion{\NonlinearTrajData}{1}
\\

& CQKF
& \CSVnum{\NonlinearData}{0}{C_mean}/\CSVnum{\NonlinearData}{0}{C_svd}
& \CSVregion{\NonlinearData}{0}
& \CSVnum{\NonlinearTrajData}{0}{S}
& \CSVregion{\NonlinearTrajData}{0}
\\

& DCP
& \CSVnum{\NonlinearData}{6}{C_mean}/\CSVnum{\NonlinearData}{6}{C_svd}
& \CSVregion{\NonlinearData}{6}
& \CSVnum{\NonlinearTrajData}{6}{S}
& \CSVregion{\NonlinearTrajData}{6}
\\

& CDKF
& \CSVnum{\NonlinearData}{5}{C_mean}/\CSVnum{\NonlinearData}{5}{C_svd}
& \CSVRegionBold{\NonlinearData}{5}
& \CSVnum{\NonlinearTrajData}{5}{S}
& \CSVRegionBold{\NonlinearTrajData}{5}
\\

\midrule


\multirow{7}{*}{\textbf{Pendulum (UKF)}}

& Gauss
& \textcolor{red}{%
    \CSVnum{\NonlinearUKFData}{3}{C_mean}/%
    \CSVnum{\NonlinearUKFData}{3}{C_svd}}
& \CSVregion{\NonlinearUKFData}{3}
& --
& --
\\

& CGKF
& \CSVnum{\NonlinearUKFData}{2}{C_mean}/\CSVnum{\NonlinearUKFData}{2}{C_svd}
& \CSVregion{\NonlinearUKFData}{2}
& \CSVnum{\NonlinearUKFTrajData}{2}{S}
& \CSVregion{\NonlinearUKFTrajData}{2}
\\

& Rec
& \CSVnum{\NonlinearUKFData}{4}{C_mean}/\CSVnum{\NonlinearUKFData}{4}{C_svd}
& \CSVregion{\NonlinearUKFData}{4}
& \CSVnum{\NonlinearUKFTrajData}{4}{S}
& \CSVregion{\NonlinearUKFTrajData}{4}
\\

& DQR
& \CSVnum{\NonlinearUKFData}{1}{C_mean}/\CSVnum{\NonlinearUKFData}{1}{C_svd}
& \CSVregion{\NonlinearUKFData}{1}
& \CSVnum{\NonlinearUKFTrajData}{1}{S}
& \CSVregion{\NonlinearUKFTrajData}{1}
\\

& CQKF
& \CSVnum{\NonlinearUKFData}{0}{C_mean}/\CSVnum{\NonlinearUKFData}{0}{C_svd}
& \CSVregion{\NonlinearUKFData}{0}
& \CSVnum{\NonlinearUKFTrajData}{0}{S}
& \CSVregion{\NonlinearUKFTrajData}{0}
\\

& DCP
& \CSVnum{\NonlinearUKFData}{6}{C_mean}/\CSVnum{\NonlinearUKFData}{6}{C_svd}
& \CSVregion{\NonlinearUKFData}{6}
& \CSVnum{\NonlinearUKFTrajData}{6}{S}
& \CSVregion{\NonlinearUKFTrajData}{6}
\\

& CDKF
& \CSVnum{\NonlinearUKFData}{5}{C_mean}/\CSVnum{\NonlinearUKFData}{5}{C_svd}
& \CSVRegionBold{\NonlinearUKFData}{5}
& \CSVnum{\NonlinearUKFTrajData}{5}{S}
& \CSVRegionBold{\NonlinearUKFTrajData}{5}
\\

\midrule


\multirow{7}{*}{\textbf{Lorenz}}

& Gauss
& \textcolor{red}{%
    \CSVnum{\LorenzData}{3}{C_mean}/%
    \CSVnum{\LorenzData}{3}{C_svd}}
& \CSVregion{\LorenzData}{3}
& --
& --
\\

& CGKF
& \CSVnum{\LorenzData}{2}{C_mean}/\CSVnum{\LorenzData}{2}{C_svd}
& \CSVregion{\LorenzData}{2}
& \CSVnum{\LorenzTrajData}{2}{S}
& \CSVregion{\LorenzTrajData}{2}
\\

& Rec
& \CSVnum{\LorenzData}{4}{C_mean}/\CSVnum{\LorenzData}{4}{C_svd}
& \CSVregion{\LorenzData}{4}
& \CSVnum{\LorenzTrajData}{4}{S}
& \CSVregion{\LorenzTrajData}{4}
\\

& DQR
& \CSVnum{\LorenzData}{1}{C_mean}/\CSVnum{\LorenzData}{1}{C_svd}
& \CSVregion{\LorenzData}{1}
& \CSVnum{\LorenzTrajData}{1}{S}
& \CSVregion{\LorenzTrajData}{1}
\\

& CQKF
& \CSVnum{\LorenzData}{0}{C_mean}/\CSVnum{\LorenzData}{0}{C_svd}
& \CSVregion{\LorenzData}{0}
& \CSVnum{\LorenzTrajData}{0}{S}
& \CSVRegionBold{\LorenzTrajData}{0}
\\

& DCP
& \CSVnum{\LorenzData}{6}{C_mean}/\CSVnum{\LorenzData}{6}{C_svd}
& \CSVregion{\LorenzData}{6}
& \CSVnum{\LorenzTrajData}{6}{S}
& \CSVregion{\LorenzTrajData}{6}
\\

& CDKF
& \CSVnum{\LorenzData}{5}{C_mean}/\CSVnum{\LorenzData}{5}{C_svd}
& \CSVRegionBold{\LorenzData}{5}
& \CSVnum{\LorenzTrajData}{5}{S}
& \CSVregion{\LorenzTrajData}{5}
\\

\bottomrule
\end{tabular}

\end{adjustbox}

\caption{
Performance comparison for multivariate scenarios at
$\mathrm{SNR}=-15$~dB.
Each miscoverage entry reports the average value/STD, while each
normalized region-size entry reports the normalized average/STD.
For both sample-wise and trajectory-wise evaluation, the region-size
mean and STD are normalized by the mean region size of the corresponding
Gaussian-posterior baseline.
Thus, a normalized value greater than one indicates a region larger than
the corresponding Gaussian-posterior region.
\textcolor{red}{Red} entries denote failure to meet the required
$\leq 5\%$ miscoverage rate.
}

\label{tab:multidim_results}

\end{table*}

\smallskip
\noindent
{\bf Learning and Evaluation:}
For \ac{dqr}-based methods we use an \ac{mlp} with hidden-layer widths $[128,128]$ and $|\mySet{U}|=128$ unit-norm directions, sampled by drawing $\tilde{\myVec{u}}_j\sim\mathcal{N}(\mathbf{0},\myMat{I})$ and setting $\myVec{u}_j=\tilde{\myVec{u}}_j/\|\tilde{\myVec{u}}_j\|_2$. Following~\cite{feldman2023calibrated}, the directional miscoverage levels used during training are set to $\alpha/7$ and $\alpha/20$ in two and three dimensions, respectively; conformal calibration itself always uses the nominal level $\alpha$. For \ac{dcp}-based methods, we use an \ac{mlp} with hidden-layer widths $[100,100]$ to parameterize a mixture of $M=10$ Gaussian components. The \ac{dqr} networks are trained with the pinball loss~\eqref{eqn:Pinball}, whereas the \ac{dcp} networks are trained by negative log-likelihood.

We set the miscoverage level to $\alpha=5\%$. The scalar experiments use trajectories of length $T=100$ and are averaged over 200 trials, each defined by a random $80\%$–$20\%$ calibration–test split, at $\mathrm{SNR}=0$~dB.The Bonferroni variants use a dataset of size $|\mathcal{D}|=10{,}000$, whereas the remaining methods use $|\mathcal{D}|=1{,}000$. For \textbf{Residuals-LCP}, \ac{cqr}, and \ac{cqkf}, $10\%$ of these $1{,}000$ samples are used for learning and the remaining $90\%$ for conformal calibration. The \ac{cqr} and \ac{cqkf} networks are trained for 500 epochs.

The multivariate experiments use trajectories of length $T=50$. Learning-based methods use 1000 trajectories for training, 800 for calibration, and 200 independent trajectories for testing, and are trained for 1000 epochs. The results are averaged over 20 random splits of the available trajectories into calibration, and test sets. Methods that require no training use the combined training and calibration data, i.e., 1800 trajectories, for  calibration so as to keep the total amount of available data fixed.

We report both \emph{per-sample miscoverage} and \emph{per-trajectory miscoverage}. The former is the fraction of individual states lying outside their corresponding regions, while the latter is the fraction of trajectories containing at least one uncovered state. Region efficiency is measured by interval width in the scalar case and by region volume for multivariate states. The latter is estimated using a uniform grid whose limits are determined by the empirical $0.01$ and $0.99$ quantiles of the training targets with a margin of $2$. If $N_{\rm in}$ of $N_{\rm grid}$ grid points belong to the confidence region, we approximate its volume as the $\frac{N_{\rm in}}{N_{\rm grid}}$ portion of the grid volume.  We report the  values for the interval width/region size normalized by the corresponding values obtained with standard Gaussian modeling.

\subsection{Scalar-State Results}
\label{subsec:scalar_results}
The scalar-state results are summarized in Table~\ref{tab:results_1D}.
The uncalibrated \textbf{Gauss} intervals attain the desired coverage on average in the matched linear Gaussian and linear non-Gaussian settings but exhibit high variability, even in the linear Gaussian setting, and become unreliable when the model assumptions are violated.
The conformally calibrated methods attain the prescribed sample-wise or trajectory-wise reliability across the considered scenarios. As expected from the analysis in Subsection~\ref{subsec: Guarantees},
the trajectory-wise constructions achieve the corresponding joint coverage.

Among the proposed methods, \ac{cgkf} yields particularly tight intervals in the matched linear Gaussian setting, where the Gaussian geometry used in its initialization accurately represents the posterior. In the nonlinear setting, \ac{cqkf} benefits from its ability to construct asymmetric intervals and yields tighter regions, as illustrated in Fig.~\ref{fig:CGKF_CQKF}. Moreover, even under model mismatch, \ac{cqkf} produces narrower intervals than observation-based \ac{cqr}, demonstrating the benefit of using the time-varying posterior moments as conformal features. The direct trajectory-wise constructions also avoid the substantially larger calibration sets required by the Bonferroni variants. The proposed methods also outperform the residual-based baselines, \textbf{Residuals} is disadvantaged by a non-adaptive, constant-width band over time \(t\), while \textbf{Residuals-LCP} further loses efficiency by reserving \(10\%\) of the data to train its coefficients.

\begin{figure*}[t!]
    \centering
    \includegraphics[
        width=\textwidth,
        trim=50 30 50 30,
        clip
    ]{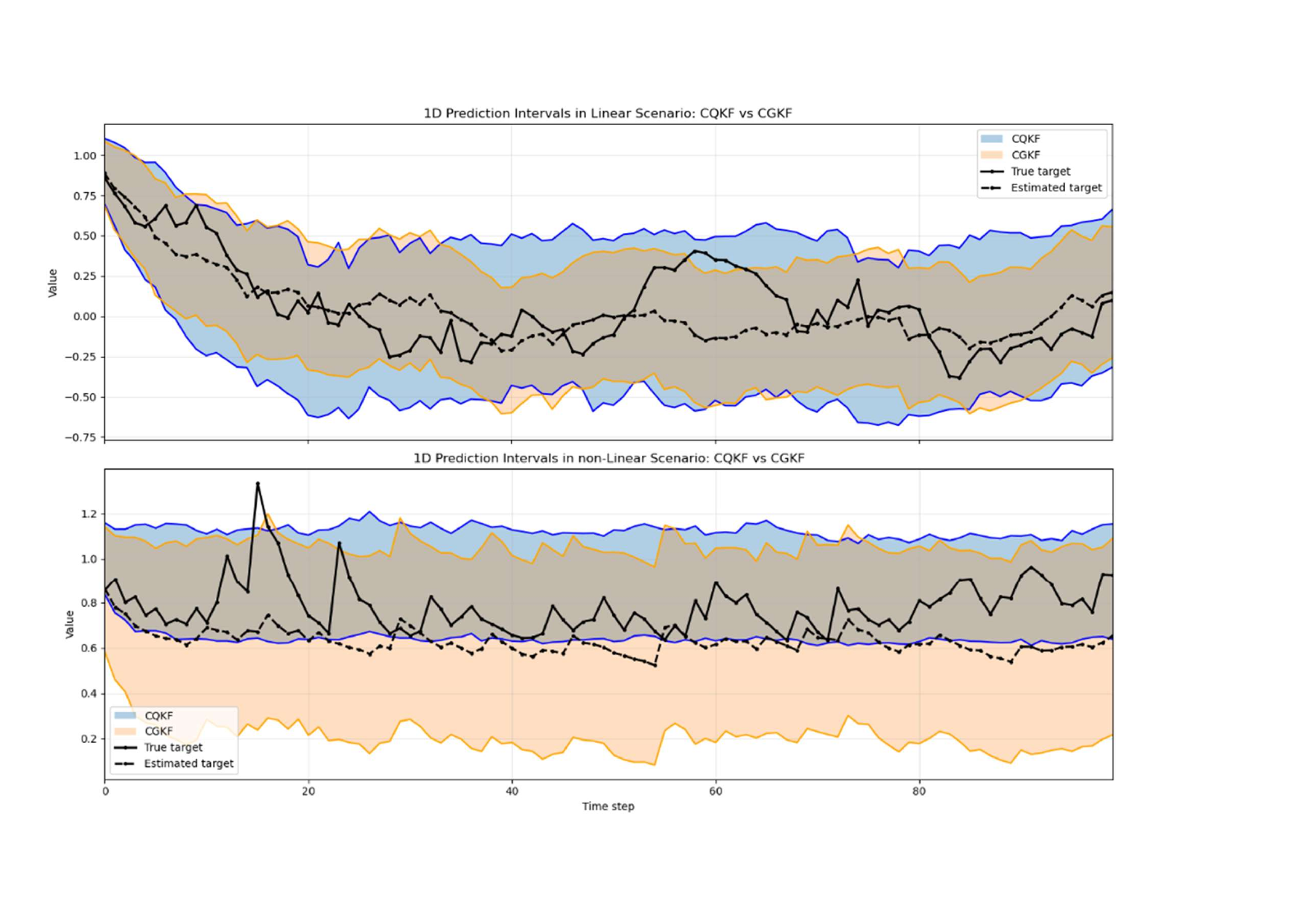}
    \caption{One-dimensional prediction intervals obtained by
    \ac{cqkf} and \ac{cgkf} for the linear and nonlinear
    \ac{ss} models.}
    \label{fig:CGKF_CQKF}
\end{figure*}

\subsection{Multivariate-State Results}
\label{subsec:multidim_results}
The multivariate results are summarized in Table~\ref{tab:multidim_results}. The results demonstrate the importance of conformal calibration: while the nominal Gaussian posterior can provide reasonable sample-wise regions in the matched linear Gaussian setting, its coverage deteriorates markedly for nonlinear and chaotic dynamics. All three proposed constructions largely restore the prescribed coverage, for both sample-wise and trajectory-wise calibration, despite their markedly different region geometries.

The results further demonstrate the benefit of exploiting the posterior information generated by the Kalman-type filter. The observation-based \ac{dqr} and \ac{dcp} baselines generally require larger regions than their posterior-aware counterparts \ac{cqkf} and \ac{cdkf}. In the matched linear Gaussian setting, \ac{cgkf} provides the most compact regions, as expected from the agreement between its ellipsoidal construction and the true posterior geometry. As the posterior departs from this setting, the learned constructions become increasingly advantageous. In particular, \ac{cdkf} generally produces the smallest regions among the learned approaches, while \ac{cqkf} also provides a substantial reduction relative to conventional \ac{dqr}. The improvement is especially pronounced for the Lorenz attractor, where the greater flexibility of the learned posterior-aware constructions yields considerably more compact regions than either Gaussian or observation-only conformal alternatives.

\subsection{Complexity and Sensitivity}
\label{subsec:complexity_results}

We conclude by examining the computational cost associated with the additional flexibility of our constructions. Since model training and conformal calibration are carried out offline, inference latency is particularly relevant for online tracking. The resulting latencies for the Lorenz experiment, all measured on the same AMD Ryzen AI 9 HX 370 CPU platform, are reported in Table~\ref{tab:complexity}.
As expected, \ac{cgkf} incurs essentially no learned-model overhead beyond the Kalman-type filtering operation and the evaluation of the Mahalanobis score. \ac{cqkf} introduces the evaluation of a directional quantile network but remains substantially less computationally demanding than \ac{cdkf}. The latter provides the most flexible confidence regions, but its mixture-density representation leads to considerably higher inference times. These results quantify the geometry-complexity tradeoff discussed in Subsection~\ref{subsec: Discussion}.




\begin{table}[t]
\centering
\caption{Average inference time per sample for the considered methods in the Lorenz attractor simulation.}
\label{tab:complexity}
\setlength{\tabcolsep}{3.5pt}
\resizebox{\linewidth}{!}{%
\begin{tabular}{lccccccc}
\toprule
Method
& Gauss
& CGKF
& Rec
& DQR
& CQKF
& DCP
& CDKF
\\
\midrule
Evaluation [ms]
& 0.149
& 0.155
& 1.075
& 0.870
& 0.878
& 1.830
& 1.862
\\
\\
\bottomrule
\end{tabular}
}
\end{table}
Throughout the multivariate experiments reported above, we use $|\mySet{U}|=128$.
The computational cost of \ac{cqkf} can further be adjusted through the number of directional constraints $|\mySet{U}|$. Fig.~\ref{fig:effect_number_of_directions} evaluates this tradeoff for the linear Gaussian model at $\mathrm{SNR}=-15$~dB. Increasing $|\mySet{U}|$ yields a finer approximation of the region boundary and reduces the resulting region size, bringing it closer to the ellipsoidal \ac{cgkf} construction in this setting, at the expense of evaluating more directional quantiles. 

\begin{figure}[t!]
\centering
\includegraphics[width=\columnwidth]{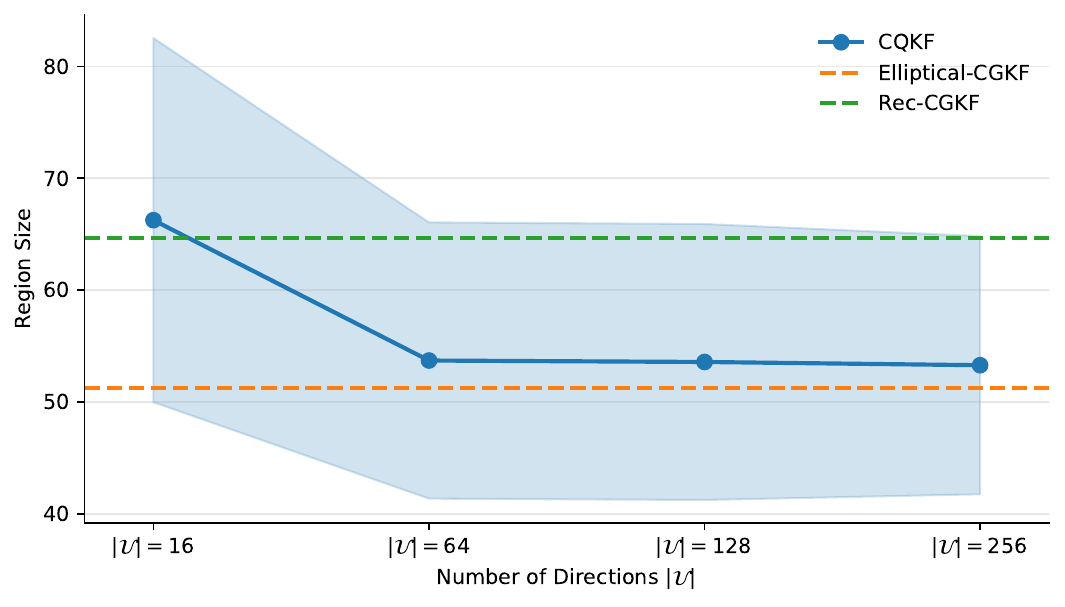}
\caption{Effect of the number of directions $|\mySet{U}|$ on the
prediction-region size for the linear Gaussian scenario at
$\mathrm{SNR}=-15$~dB. 
}
\label{fig:effect_number_of_directions}
\end{figure}

\section{Conclusion}
\label{sec:conclusion}
We proposed a conformalized Kalman filtering framework that leverages the time-varying posterior moments produced by Kalman-type estimators to construct statistically reliable confidence regions under nonlinearities, non-Gaussian disturbances, and model mismatch. The proposed \ac{cgkf}, \ac{cqkf}, and \ac{cdkf} constructions provide complementary tradeoffs between computational simplicity and region flexibility, while admitting finite-sample sample-wise and trajectory-wise coverage guarantees. Numerical results across diverse dynamical systems demonstrated that the proposed methods achieve the prescribed coverage while producing informative confidence regions, highlighting the benefit of combining model-based filtering with conformal calibration.

\bibliographystyle{IEEEtran}
\bibliography{IEEEabrv,refs}
\ifproofs
\appendix
\numberwithin{remark}{subsection} 
\numberwithin{equation}{subsection}

\subsection{Proof of Theorem~\ref{thm:c-kf-samplewise}}
\label{app:proof1} 

Fix $t\in\{1,\ldots,T\}$, and condition on the proper training set used by
the considered conformalized Kalman filtering method. Let
$R^{(1)}(t),\ldots,R^{|\mathcal{D}_{\mathrm{cal}}|}(t)$ denote the
nonconformity scores at time $t$ associated with the calibration
trajectories, and let $R^{\mathrm{test}}(t)$ denote the corresponding score
of the test trajectory.

By the construction of the confidence region, we have~\eqref{eqn:SmpWise},
\[
    s_t\in
    \mathcal{C}_t^\alpha(\hat{s}_t,\Sigma_t)
    \quad \text{iff} \quad
    R^{\mathrm{test}}(t)
    \leq
    \widehat{Q}^{1-\alpha}(t),
\]
where $\widehat{Q}^{1-\alpha}(t)$ is the computed correction from the calibration scores at time $t$.
Since the calibration and test trajectories are exchangeable, the
corresponding scores   $\{R^1(t),\ldots,R^{|\mathcal{D}_{\mathrm{cal}}|}(t),
    R^{\mathrm{test}}(t)
\}$
are also exchangeable. Therefore, by the empirical-quantile lemma for
exchangeable random variables~\cite{romano2019conformalized}, it holds that
\[
\Pr\!\left(
    R^{\mathrm{test}}(t)
    \leq
    \widehat{Q}^{1-\alpha}(t)
\right)
\geq 1-\alpha,
\]
and thus,
\[
    \Pr\!\left(
        s_t\in\mathcal{C}_t^\alpha(\hat{s}_t,\Sigma_t)
    \right)
    \geq 1-\alpha.
\]

Moreover, if the calibration and test nonconformity scores at time $t$ are
almost surely distinct, the corresponding upper bound of the
empirical-quantile lemma in~\cite{romano2019conformalized} gives
\[
\Pr\!\left(
    R^{\mathrm{test}}(t)
    \leq
    \widehat{Q}^{1-\alpha}(t)
\right)
\leq
1-\alpha+
\frac{1}{|\mathcal{D}_{\mathrm{cal}}|+1},
\]
and hence
\[
    \Pr\!\left(
        s_t\in\mathcal{C}_t^\alpha(\hat{s}_t,\Sigma_t)
    \right)
    \leq
    1-\alpha+
    \frac{1}{|\mathcal{D}_{\mathrm{cal}}|+1}.
\]


\subsection{Proof of Corollary ~\ref{cor:c-kf-bonferroni}}
\label{app:proof2}

The corollary follows from the union bound, as
\begin{align*}
\Pr\!\Big(\bigcup_{1\le t\le T}\{s_t \notin \mySet{C}_t^{\tilde{\alpha}}(\hat{s}_t, \Sigma_t)\}\Big)
&\;\le\; \sum_{1\le t\le T} \Pr\!\big(s_t \notin \mySet{C}_t^{\tilde{\alpha}}(\hat{s}_t, \Sigma_t)\big) \\
&\;\le\; \sum_{1\le t\le T} \tilde{\alpha} 
= \alpha.
\end{align*}
Taking complements proves \eqref{eqn:c-kf-bonferroni-coverage}.
\subsection{Proof of Theorem ~\ref{thm:c-kf-trajectorywise}}
\label{app:proof3} 

For any test trajectory, let $R_t$ denote the nonconformity score at time
$t$. By construction, the entire trajectory is covered if and only if all
sample-wise scores are below the calibrated threshold, or equivalently, if
their maximum is below this threshold. Thus,
\begin{align*}
&\Pr\!\big(s_t\in C_t^\alpha(\hat s_t,\Sigma_t)\ \forall t\in\{1,\ldots,T\}\big)
=\Pr\!\big(R_t\le \hat Q^{1-\alpha}\ \forall t\big) \\
&\qquad=\Pr\!\big(\bigcap_{t=1}^T \{R_t\le \hat Q^{1-\alpha}\}\big) \\
&\qquad=\Pr\!\big(\max_{1\le t\le T} R_t \le \hat Q^{1-\alpha}\big).
\end{align*}

Since the trajectories are exchangeable, their maximum nonconformity
scores are also exchangeable. The standard conformal empirical-quantile
argument therefore yields
\[
\Pr\!\big(\max_{1\le t\le T}R_t\le\hat Q^{1-\alpha}\big)
\geq 1-\alpha,
\]
which establishes the desired trajectory-wise coverage.


\fi
\end{document}